\newcommand{\inset}[2]{\{\, {#1} \,|\, {#2} \,\}}
\newcommand{\setof}[1]{\{\,{#1}\,\}}
\def\iff{\,\mathrm{iff}\,}
\title{Axiomatizing Epistemic Logic of Friendship \\
via Tree Sequent Calculus}
\titlerunning{Axiomatizing Epistemic Logic of Friendship via Tree Sequent Calculus}
\author{Katsuhiko Sano}
\authorrunning{Sano}
\institute{Department of Philosophy, Graduate School of Letters, \\
Hokkaido University, Sapporo, Japan\\
\email{v-sano@let.hokudai.ac.jp}\\
}
\begin{document}

\mainmatter  

\maketitle

\begin{abstract}
This paper positively solves an open problem if it is possible to provide a Hilbert system to Epistemic Logic of Friendship (EFL) by Seligman, Girard and Liu. To find a Hilbert system, we first introduce a sound, complete and cut-free tree (or nested) sequent calculus for EFL, which is an integrated combination of Seligman's sequent calculus for basic hybrid logic and a tree sequent calculus for modal logic. Then we translate a tree sequent into an ordinary formula to specify a Hilbert system of EFL and finally show that our Hilbert system is sound and complete for an intended two-dimensional semantics. 
\keywords{Epistemic Logics of Friendship, Tree Sequent Calculus, Hilbert System, Completeness, Cut Elimination Theorem}
\end{abstract}

\section{Introduction}

Epistemic Logic of Friendship ($\mathbf{EFL}$) is a version of two-dimensional modal logic proposed by~\cite{SLG2011,SLG2013a,SLG2013}. Compared to the ordinary epistemic logic~\cite{Hintikka1962}, one of the key features of their logic is to encode the information of agents into the object language by a technique of hybrid logic~\cite{Blackburn2006,AC2007}. Then, a propositional variable $p$ can be read as an indexical proposition such as ``I am $p$'' and we may formalize the sentences like ``I know that all my friends is $p$'' or ``Each of my friends knows that he/she is $p$.'' Moreover, the authors of~\cite{SLG2013a,SLG2013} added a dynamic mechanism to $\mathbf{EFL}$ for capturing public announcements~\cite{Plaza1989}, announcements to all the friends, and private announcements~\cite{BMS98} and established a relative completeness result (cf.~\cite{SLG2013a,SLG2013,GSL2012GDDL}), i.e., they provided a set of recursion axioms for dynamic operators. So once we can provide a sound and complete proof system for $\mathbf{EFL}$, i.e., the fragment without dynamic operators, we can also establish the semantic completeness of the dynamic extension of $\mathbf{EFL}$. Therefore, this paper focuses on an open problem of axiomatizing $\mathbf{EFL}$ in terms of Hilbert system, i.e., the static part of their framework. 

A difficulty of the problem comes from a combination of {\em modal logic} for agents' knowledge and {\em hybrid logic} for a friendship relation among agents. If we combine {\em two hybrid logics} over two-dimensional semantics of~\cite{SLG2011,SLG2013a,SLG2013}, it is noted that there is an axiomatization of all valid formulas in the semantics by~\cite[p.471]{Sano2010a}. Our approach to tackle the problem is via a sequent calculus, whose idea is originally from Gentzen. In particular, our notion of sequent for $\mathbf{EFL}$ can be regarded as a combination of a tree or nested sequent~\cite{Kashima1994,Bruennler2009} for modal logic and $@$-prefixed sequent~\cite{Seligman2001,Brauner2011} for hybrid logic. One of the merits of our notion of sequent is that we can still translate our sequent into an ordinary formula. This allows us to specify our desired Hilbert system for $\mathbf{EFL}$. We note that~\cite{Christoff2016} independently provided a prefixed tableau system for a dynamic extension of $\mathbf{EFL}$. There are at least three points we should emphasize on our work. First, our tree sequent system is quite simpler than the tableau system given in~\cite{Christoff2016}, i.e., the number of rules of our sequent system is almost half of the number of rules of their system. Second, it is not clear if a prefixed formula in~\cite{Christoff2016} for the tableau calculus can be translated into an ordinary formula. Their result is not concerned with Hilbert systems. Third, their syntax contains a special kind of propositional variable (called {\em feature proposition}) and they include a tableau rule called {\em propositional cut} to handle such propositions. On the other hand, we can show that our tree sequent calculus enjoys the cut elimination theorem, the most fundamental theorem in proof-theory. 

We proceed as follows. Section \ref{sec:syn_sem} introduces the syntax and semantics of $\mathbf{EFL}$. Section \ref{sec:tree_seq} provides a tree sequent calculus for $\mathbf{EFL}$ and establishes the soundness of the sequent calculus (Theorem \ref{thm:sound_tree_seq4efl}). Section \ref{sec:comp_tree} establishes a completeness result of a cut-free fragment of our sequent calculus (Theorem \ref{thm:comp_tree_seq4efl}). As a corollary, we also provide a semantic proof of the cut elimination theorem of our sequent calculus (Theorems \ref{thm:comp_hil4efl} and \ref{thm:sound_hil4efl}, Corollary \ref{cor:sum_tree}). Section \ref{sec:hil_helf} specifies a Hilbert system of $\mathbf{EFL}$, and provides a syntactic proof of the equipollence between our proposed Hilbert system and our tree sequent calculus, which implies the soundness and completeness results for our Hilbert system (Corollary \ref{cor:summary}). Section \ref{sec:ext} extends our technical results to cover extensions of $\mathbf{EFL}$ where a modal operator for states (or a knowledge operator) obeys $\mathbf{KT}$, $\mathbf{S4}$ or $\mathbf{S5}$ axioms and a friendship relation satisfies a certain form of universal property (Theorems \ref{thm:ex_tree} and \ref{thm:ex_hil}, Corollary \ref{cor:comp_efls5}). The result of this section subsumes the logic given in~\cite{Christoff2016}, provided we drop the dynamic operator from the syntax of~\cite{Christoff2016}. Section \ref{sec:fd} concludes this paper. 

\section{Syntax and Two-dimensional Kripke Semantics}
\label{sec:syn_sem}

Our syntax $\mathcal{L}$ consists of the following vocabulary: a countably infinite set $\mathsf{Prop}$ = $\setof{p,q,r,\ldots}$ of propositional variables, a countably infinite set $\mathsf{Nom}$ = $\setof{n,m,l,\ldots}$ of agent nominal variables, the Boolean connectives of $\to$ (the implication) and $\bot$ (the falsum), the satisfaction operators $@$ and the friendship operator $\mathsf{F}$ (read as ``all my friends are ...'') as well as the modal operator $\Box$ which may be regarded as the knowledge operator. We note that an {\em agent nominal} $n \in \mathsf{Nom}$ is a syntactic name of an agent or an individual, which amounts to a constant symbol of the first-order logic, while $n$ is read indexically as ``I am $n$.'' Similarly, we read a propositional variable $p \in \mathsf{Prop}$ also indexically by ``I am $p$,'' e.g., ``I am in danger.'' The set $\mathsf{Form}$ of formulas in $\mathcal{L}$ is defined inductively as follows:
\[
\mathsf{Form} \ni \varphi :: =  n \,|\, p \,|\, \bot \,|\, \varphi \to \varphi \,|\, @_{n}\varphi \,|\, \mathsf{F} \varphi \,|\, \Box \varphi,
\]
where $n \in \mathsf{Nom}$ and $p \in \mathsf{Prop}$. Boolean connectives other than $\to$ or $\bot$ are introduced as ordinary abbreviations. We define the dual of $\Box$ as $\Diamond$ := $\neg \Box \neg$ and the dual of $\mathsf{F}$ as $\langle \mathsf{F} \rangle$  := $\neg \mathsf{F} \neg$. Moreover, a formula of the form $@_{n} \varphi$ is said to be {\em $@$-prefixed}. Let us read $\Box$ as ``I know that.'' Here are some examples of how to read formulas:
\begin{itemize}
\item $\Box p$, read as ``I know that I am $p$.''
\item $@_{n}\Box p$, read as ``$n$ knows that she is $p$.'' 
\item $\Box @_{n}p$, read as ``I know that agent $n$ is $p$.''
\item $\mathsf{F} p$, read as ``all my friends are $p$.''
\item $\mathsf{F} \Box p$, read as ``all my friends know that they are $p$.''
\item $\Box \mathsf{F} p$, read as ``I know that all my friends are $p$.''
\item $@_{n} \langle \mathsf{F}\rangle m$, read as ``agent $m$ is a friend of agent $n$.''
\end{itemize}
We say that a mapping $\sigma: \mathsf{Prop} \cup \mathsf{Nom} \to \mathsf{Form}$ is a {\em uniform substitution} if $\sigma$ uniformly substitutes propositional variables by formulas and agent nominals by agent nominals and we use $\varphi \sigma$ to mean the result of applying a uniform substitution $\sigma$ to $\varphi$. In particular, we use $\varphi[n/k]$ to mean the result of substituting each occurrence of agent nominal $k$ in $\varphi$ uniformly with agent nominal $n$. 

A {\em model} $\mathfrak{M}$ for our syntax $\mathcal{L}$ is a tuple 
\[
(W, A, (R_{a})_{a \in A}, (\asymp_{w})_{w \in W},V),
\] 
\noindent where $W$ is a non-empty set of possible states, $A$ is a non-empty set of agents, $R_{a}$ is a binary relation on $W$ ($a \in A$), $\asymp_{w}$ is a binary relation on $A$ (called a {\em friendship relation}, $w \in W$), $V$ is a valuation function $\mathsf{Prop} \cup \mathsf{Nom} \to \mathcal{P}(W \times A)$ such that $V(n)$ is a subset of $W \times A$ of the form $W \times \setof{a}$. When $V(n)$ = $W \times \setof{a}$, we denote such unique element $a$ by $\underline{n}$. We note that a semantic value $\underline{n}$ of a nominal $n$ is {\em rigid} over all possible states. We do not require any property for $R_{a}$ and $\asymp_{w}$ but we will come back to this point in Section \ref{sec:ext}. We say that a tuple $\mathfrak{F}$ = $(W, A, (R_{a})_{a \in A}, (\asymp_{w})_{w \in W})$ without a valuation is a {\em frame}. 

Let $\mathfrak{M}$ = $(W, A, (R_{a})_{a \in A}, (\asymp_{w})_{w \in W},V)$ be a model. Given a pair $(w,a) \in W \times A$ and a formula $\varphi$, the satisfaction relation $\mathfrak{M},(w,a) \models \varphi$ (read ``agent $a$ satisfies $\varphi$ at $w$ in $\mathfrak{M}$ '') inductively as follows:
\[
\begin{array}{lll}
\mathfrak{M}, (w,a) \models p & \iff & (w,a) \in V(p), \\
{\mathfrak{M}}, (w,a) \models n & \iff & \underline{n} = a,  \\
{\mathfrak{M}}, (w,a) \not\models \bot &  &   \\
{\mathfrak{M}}, (w,a) \models \varphi \to \psi & \iff &  {\mathfrak{M}}, (w,a) \models \varphi \text{ implies } {\mathfrak{M}}, (w,a) \models \psi \\
{\mathfrak{M}}, (w,a) \models @_{n} \varphi & \iff & \mathfrak{M}, (w,\underline{n}) \models \varphi, \\
\mathfrak{M}, ({w},a) \models  \mathsf{F} \varphi & \iff & (a\asymp_{{w}}b \text{ implies }\mathfrak{M}, ({w},b) \models \varphi) \text{ for all agents $b \in A$},  \\
\mathfrak{M}, (w,{a}) \models \Box \varphi & \iff& 
(wR_{{a}}v \text{ implies }\mathfrak{M}, (v,{a}) \models \varphi) \text{ for all states $v \in W$}.  \\
\end{array}
\]
Given a class $\mathbb{M}$ of models, we say that a formula $\varphi$ is {\em valid} in $\mathbb{M}$ when $\mathfrak{M}, (w,a) \models \varphi$ for all pairs $(w,a)$ in $\mathfrak{M}$ and all models $\mathfrak{M} \in \mathbb{M}$. This paper tackles the question if the set of all valid formulas in the class of all models is axiomatizable.

\section{Tree Sequent Calculus of Epistemic Logic of Friendship}
\label{sec:tree_seq}
A {\em label} is inductively defined as follows: Any natural number is a label; if $\alpha$ is a label, $n$ is an agent nominal in $\mathsf{Nom}$ and $i$ is a natural number, then $\alpha \cdot_{n} i$ is also a label. When $\beta$ is $\alpha \cdot_{n} i$, then we say that $\beta$ is an {\em $n$-child} of $\alpha$ or that $\alpha$ is {\em an $n$-parent} of $\beta$. A {\em tree} $\mathcal{T}$ is a set of labels such that the set contains the unique natural number $j$ as the root label and the set is closed under taking the parent of a label, i.e., $\alpha \cdot_{n} i \in \mathcal{T}$ implies $\alpha \in \mathcal{T}$ for all labels $\alpha$, agent nominals $n$ and natural numbers $i$. For example, all of $0$, $0\cdot_{n}1$ and $0\cdot_{k} 2$ are labels and they form a finite tree. 

\begin{figure}[htbp]
\begin{center}
{\includegraphics[clip, width=50mm]{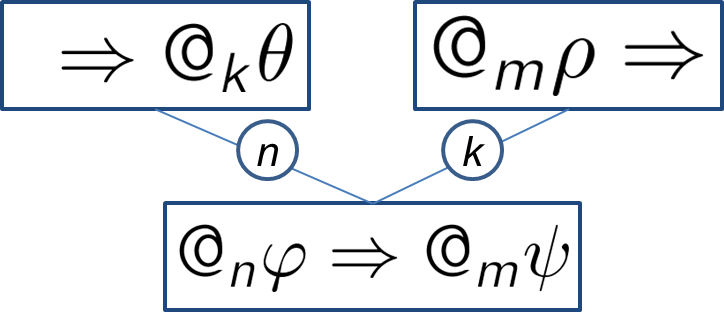}}\\
\end{center}
\caption{A tree sequent}
\label{fig:tree_seq}
\end{figure}

Given a label $\alpha$ and an {\em $@$-prefixed formula} $\varphi$, the expression $\alpha: \varphi$ is said to be a {\em labelled formula}, where recall that an $@$-prefixed formula is of the form $@_{n}\varphi$. 
A {\em tree sequent} is an expression of the form 
\begin{center}
$\Gamma \stackrel{\mathcal{T}}{\Rightarrow} \Delta$
\end{center}
where $\Gamma$ and $\Delta$ are {\em finite} sets of labelled formulas, $\mathcal{T}$ is a finite tree of labels, and all the labels in $\Gamma$ and $\Delta$ are in $\mathcal{T}$. A tree sequent ``$\Gamma \stackrel{\mathcal{T}}{\Rightarrow} \Delta$'' is read as ``if we assume all labelled formulas in $\Gamma$, then we may conclude some labelled formulas in $\Delta$.'' A tree sequent $0:@_{n}\varphi, 0\cdot_{k}2: @_{m} \rho \stackrel{\mathcal{T}}{\Rightarrow} 0:@_{m}\psi, 0\cdot_{n}1: @_{k} \theta$ is represented as in Fig.~\ref{fig:tree_seq}, where $\mathcal{T}$ = $\setof{0, 0\cdot_{n}1, 0\cdot_{k} 2}$. That is,  $0$, $0\cdot_{n}1$ and $0\cdot_{k} 2$ are ``addresses'' of the root, the left leaf, and the right leaf, respectively. Therefore, our tree sequent is a finite tree, each of which nodes has an $@$-prefixed sequent as given in~\cite{Seligman2001,Brauner2011}.

\begin{table}[htbp]
\hrule
\vspace{1pt}
\caption{Tree Sequent Calculus $\mathsf{T}\mathbf{EFL}$}
\label{table:tree_seq_calc}
\begin{tabular}{cc}
$(\bot)$ \,\, $\alpha: @_{n}\bot, \Gamma\stackrel{\mathcal{T}}{\Rightarrow} \Delta$ & 
$(\mathsf{id})$ \,\,${\alpha: @_{n}\varphi}, \Gamma \stackrel{\mathcal{T}}{\Rightarrow} \Delta,  {\alpha: @_{n}\varphi}$ \\
$\infer[(\mathsf{rep}_{=1})]{{\alpha:@_{n}m}, \alpha:\varphi[m/k], \Gamma  \stackrel{\mathcal{T}}{\Rightarrow} \Delta }{{\alpha:@_{n}m}, \alpha:\varphi[n/k], \Gamma  \stackrel{\mathcal{T}}{\Rightarrow} \Delta}$ & 
$\infer[(\mathsf{rep}_{=2})]{\alpha:@_{n}m, \alpha:\varphi[n/k], \Gamma  \stackrel{\mathcal{T}}{\Rightarrow} \Delta }{\alpha:@_{n}m, \alpha:\varphi[m/k],\Gamma  \stackrel{\mathcal{T}}{\Rightarrow} \Delta}$ \\ 
$\infer[(\mathsf{ref}_{=})]{\Gamma \stackrel{\mathcal{T}}{\Rightarrow} \Delta}{ {\alpha: @_{n}n}, \Gamma \stackrel{\mathcal{T}}{\Rightarrow} \Delta}$
&
$\infer[(\mathsf{rigid}_{=})]{{\alpha: @_{n}m}, \Gamma \stackrel{\mathcal{T}}{\Rightarrow} \Delta}{{\beta: @_{n}m}, \Gamma \stackrel{\mathcal{T}}{\Rightarrow} \Delta }
$ \\
$
\infer[(\mathop{\to} R)]{\Gamma \stackrel{\mathcal{T}}{\Rightarrow} \Delta, \alpha:@_{n}(\varphi \to \psi)}{\alpha:@_{n}\varphi, \Gamma \stackrel{\mathcal{T}}{\Rightarrow} \Delta, \alpha:@_{n} \psi}$
&
$\infer[(\mathop{\to} L)]{\alpha:@_{n}(\varphi \to \psi), \Gamma \stackrel{\mathcal{T}}{\Rightarrow} \Delta}{\Gamma \stackrel{\mathcal{T}}{\Rightarrow} \Delta, \alpha:@_{n}\varphi  & \alpha:@_{n}\psi, \Gamma \stackrel{\mathcal{T}}{\Rightarrow} \Delta}
$ \\
$\infer[(@R)]{\Gamma \stackrel{\mathcal{T}}{\Rightarrow} \Delta, \alpha:@_{n}@_{m}\varphi}{\Gamma \stackrel{\mathcal{T}}{\Rightarrow} \Delta, \alpha:@_{m}\varphi}$ 
&
$\infer[(@L)]{\alpha:@_{n}@_{m}\varphi, \Gamma \stackrel{\mathcal{T}}{\Rightarrow} \Delta}{\alpha:@_{m}\varphi, \Gamma \stackrel{\mathcal{T}}{\Rightarrow} \Delta}$ \\
$\infer[(\mathsf{F}R)^{\ast}]{\Gamma \stackrel{\mathcal{T}}{\Rightarrow} \Delta, {\alpha:@_{n} \mathsf{F} \varphi}}{{\alpha:@_{n} \langle \mathsf{F} \rangle m}, \Gamma \stackrel{\mathcal{T}}{\Rightarrow} \Delta, {\alpha:@_{m}\varphi}}$
&
$\infer[(\mathsf{F}L)]{{\alpha:@_{n} \mathsf{F} \varphi}, \Gamma \stackrel{\mathcal{T}}{\Rightarrow} \Delta}{
\Gamma \stackrel{\mathcal{T}}{\Rightarrow} \Delta, {\alpha:@_{n} \langle \mathsf{F} \rangle m} 
&
{\alpha:@_{m}\varphi}, \Gamma \stackrel{\mathcal{T}}{\Rightarrow} \Delta
}
$\\
$
\infer[(\Box R)^{\dagger}]{\Gamma \stackrel{\mathcal{T}}{\Rightarrow} \Delta, {\alpha:@_{n}\Box \varphi}}{\Gamma \stackrel{\mathcal{T} \cup {\setof{\gamma}}}{\Rightarrow} \Delta , {\gamma:@_{n} \varphi}}$
&
$
\infer[(\Box L)^{\ddagger}]{{\alpha:@_{n}\Box \varphi}, \Gamma \stackrel{\mathcal{T}}{\Rightarrow} \Delta}{{\beta:@_{n} \varphi}, \Gamma \stackrel{\mathcal{T}}{\Rightarrow} \Delta}
$\\
$\infer[(w \mathsf{lab})^\star]{\Gamma \stackrel{\mathcal{T} \cup \{\alpha\} }{\Rightarrow} \Delta}{\Gamma \stackrel{\mathcal{T}}{\Rightarrow} \Delta}$
&
$\infer[(Cut)]{\Gamma,\Pi \stackrel{\mathcal{T}}{\Rightarrow} \Delta, \Sigma}{\Gamma \stackrel{\mathcal{T}}{\Rightarrow} \Delta, \alpha:@_{n}\varphi &  \alpha:@_{n}\varphi, \Pi \stackrel{\mathcal{T}}{\Rightarrow} \Sigma}$
\\
\end{tabular}
\\
 $\ast$: $m$ is a fresh agent nominal in the lower sequent; $\dagger$: $\gamma$ is an $n$-child of $\alpha$ which is fresh in the lower sequent; $\ddagger$: $\beta$ is an $n$-child of $\alpha$; $\star$: \text{ $\mathcal{T} \cup \{\alpha\}$ is a tree of labels. }
\vspace{1pt} 
\hrule
\end{table}

Table \ref{table:tree_seq_calc} provides all the initial sequents and all the inference rules of tree sequent calculus $\mathsf{T}\mathbf{EFL}$, where recall that $\varphi[m/k]$ is the result of substituting each occurrence of agent nominal $k$ in $\varphi$ with agent nominal $m$. The system without the cut rule is denoted by $\mathsf{T}\mathbf{EFL}^{-}$. All the initial sequents and inference rules except $(\mathsf{rigid}_{=})$, $(w \mathsf{lab})$, $(\Box R)$ and $(\Box L)$ originate from sequent calculus for hybrid logic in terms of $@$-prefixed sequents (cf.~\cite{Seligman2001,Brauner2011} ). The inference rules $(w \mathsf{lab})$, $(\Box R)$ and $(\Box L)$ reflect the idea of tree or nested sequent calculus for modal logic (cf.~\cite{Kashima1994,Bruennler2009}). Finally the rule $(\mathsf{rigid}_{=})$ encodes the semantic idea that a semantic value of a nominal is rigid, i.e., the same through all possible states.

A {\em derivation} in $\mathsf{T}\mathbf{EFL}$ (or $\mathsf{T}\mathbf{EFL}^{-}$) is a finite tree generated from initial sequents by inference rules of $\mathsf{T}\mathbf{EFL}$ (or $\mathsf{T}\mathbf{EFL}^{-}$, respectively). The {\em height} of a derivation is defined as the maximum length of branches in the derivation from the end (or root) sequent to an initial sequent. A tree sequent $\Gamma \stackrel{\mathcal{T}}{\Rightarrow} \Delta$ is said to be {\em provable} in $\mathsf{T}\mathbf{EFL}$ (or $\mathsf{T}\mathbf{EFL}^{-}$) if there is a derivation in $\mathsf{T}\mathbf{EFL}$ (or $\mathsf{T}\mathbf{EFL}^{-}$, respectively) such that the root of the tree is $\Gamma \stackrel{\mathcal{T}}{\Rightarrow} \Delta$.

Let $\mathfrak{M}$ = $(W, A, (R_{a})_{a \in A}, (\asymp_{w})_{w \in W},V)$ be a model and $\mathcal{T}$ a tree of labels. A function $f: \mathcal{T} \to W$ is a {\em $\mathcal{T}$-assignment} in $\mathfrak{M}$ if, whenever $\beta$ is an $n$-child of $\alpha$ in $\mathcal{T}$, $f(\alpha) R_{\underline{n}} f(\beta)$ holds. When it is clear from the context, we often drop ``$\mathcal{T}$-'' from ``$\mathcal{T}$-assignment''. Given any labelled formula $\alpha:@_{n}\varphi$ with $\alpha \in \mathcal{T}$ and any $\mathcal{T}$-assignment in $\mathfrak{M}$, we define the {\em satisfaction} for a labelled formula as follows:
\[
\begin{array}{lll}
\mathfrak{M},f \models  \alpha:@_{n}\varphi & \iff&  \mathfrak{M},  ({f}(\alpha), \underline{n}) \models \varphi. \\
\end{array}
\]
where ``$\mathfrak{M},f \models  \alpha:@_{n}\varphi$'' is read as ``$\alpha:@_{n}\varphi$ is true at $(\mathfrak{M},f)$''.  Given a tree sequent $\Gamma \stackrel{\mathcal{T}}{\Rightarrow} \Delta$ and a $\mathcal{T}$-assignment in $\mathfrak{M}$, we say that $\Gamma \stackrel{\mathcal{T}}{\Rightarrow} \Delta$ is {\em true} in $(\mathfrak{M}, f)$ (notation: $\mathfrak{M},f \models \Gamma \stackrel{\mathcal{T}}{\Rightarrow} \Delta$) if, whenever all labelled formulas of $\Gamma$ is true in $(\mathfrak{M}, f)$, some labelled formulas of $\Delta$ is true in $(\mathfrak{M}, f)$. The following theorem is easy to establish. 

\begin{theorem}[Soundness of $\mathsf{T}\mathbf{EFL}$]
\label{thm:sound_tree_seq4efl}
If a tree sequent $\Gamma \stackrel{\mathcal{T}}{\Rightarrow} \Delta$ is provable in $\mathsf{T}\mathbf{EFL}$ then $\mathfrak{M},f \models \Gamma \stackrel{\mathcal{T}}{\Rightarrow} \Delta$ for all models $\mathfrak{M}$ and all assignments $f$. 
\end{theorem}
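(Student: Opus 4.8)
The plan is to prove the theorem by induction on the height of the derivation of $\Gamma \stackrel{\mathcal{T}}{\Rightarrow} \Delta$ in $\mathsf{T}\mathbf{EFL}$, showing that every initial sequent is true in every $(\mathfrak{M},f)$ and that every inference rule \emph{preserves truth in all} $(\mathfrak{M},f)$: if all premises are true in every $(\mathfrak{M},f)$, then so is the conclusion. Before handling the rules I would record two elementary semantic observations to be used throughout. (i) The truth of a labelled formula $\alpha:@_n\psi$ at $(\mathfrak{M},f)$ depends only on $f(\alpha)$ and on $\underline{n}$; in particular $\alpha:@_n n$ is always true, $\alpha:@_n m$ and $\beta:@_n m$ have the same truth value for all labels $\alpha,\beta\in\mathcal{T}$, and $\alpha:@_n@_m\varphi$ is equivalent to $\alpha:@_m\varphi$. (ii) If $\mathfrak{M},f\models\alpha:@_n m$ then $\underline{n}=\underline{m}$, so $\alpha:\varphi[n/k]$ and $\alpha:\varphi[m/k]$ have the same truth value at $(\mathfrak{M},f)$ for every $@$-prefixed $\varphi$. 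Observation (i) immediately disposes of the initial sequents $(\bot)$ and $(\mathsf{id})$ and of the rules $(\mathsf{ref}_{=})$, $(\mathsf{rigid}_{=})$, $(@R)$, $(@L)$, while (ii) disposes of $(\mathsf{rep}_{=1})$ and $(\mathsf{rep}_{=2})$. The rules $(\mathop{\to}R)$, $(\mathop{\to}L)$, $(\mathsf{F}L)$ and $(Cut)$ are then checked by the usual case analysis at a fixed $(\mathfrak{M},f)$, unwinding the satisfaction clause of the principal connective; and $(\Box L)$ uses that $f$ is a $\mathcal{T}$-assignment, namely that the $n$-child $\beta=\alpha\cdot_n j$ of $\alpha$ satisfies $f(\alpha)R_{\underline{n}}f(\beta)$, so that $\alpha:@_n\Box\varphi$ being true forces $\beta:@_n\varphi$ to be true.

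The three rules carrying side conditions need a passage between different models or assignments, and I would argue them contrapositively. For $(w\mathsf{lab})$: any $(\mathcal{T}\cup\{\alpha\})$-assignment restricts to a $\mathcal{T}$-assignment, and since every labelled formula of $\Gamma,\Delta$ lives on $\mathcal{T}$ its truth is unaffected, so validity of the premise transfers to the conclusion. For $(\Box R)$: if the conclusion $\Gamma \stackrel{\mathcal{T}}{\Rightarrow} \Delta,\alpha:@_n\Box\varphi$ is false at some $(\mathfrak{M},f)$, there is a state $v$ with $f(\alpha)R_{\underline{n}}v$ and $\mathfrak{M},(v,\underline{n})\not\models\varphi$; extending $f$ to $f'$ on $\mathcal{T}\cup\{\alpha\cdot_n i\}$ by $f'(\alpha\cdot_n i):=v$ gives a legitimate $(\mathcal{T}\cup\{\alpha\cdot_n i\})$-assignment (freshness of $i$ makes $\alpha\cdot_n i$ a genuinely new leaf, so no other constraint is touched), at which the premise is false. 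For $(\mathsf{F}R)$: if the conclusion is false at $(\mathfrak{M},f)$, there is an agent $b$ with $\underline{n}\asymp_{f(\alpha)}b$ and $\mathfrak{M},(f(\alpha),b)\not\models\varphi$; passing to the model $\mathfrak{M}'$ obtained from $\mathfrak{M}$ by redefining only $V'(m):=W\times\{b\}$ (so $\underline{m}=b$ in $\mathfrak{M}'$) leaves $\mathfrak{M}'$ a model, leaves $f$ a $\mathcal{T}$-assignment (the relations are untouched and $m$ occurs nowhere in $\mathcal{T}$), leaves every labelled formula of the lower sequent unchanged (since $m$ occurs in none of $\Gamma,\Delta,\alpha:@_n\mathsf{F}\varphi$), makes $\alpha:@_n\langle\mathsf{F}\rangle m$ true and $\alpha:@_m\varphi$ false, so the premise is false at $(\mathfrak{M}',f)$.

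The main obstacle, modest as it is, is the bookkeeping around the freshness side conditions of $(\mathsf{F}R)$ and $(\Box R)$: one must read ``fresh in the lower sequent'' so as to include the tree component $\mathcal{T}$, and verify that the model modification (for $(\mathsf{F}R)$) or assignment extension (for $(\Box R)$) used to refute the premise neither breaks the $\mathcal{T}$-assignment condition nor alters the truth value of any side formula. Once these points are isolated, every remaining case is a routine unwinding of the satisfaction clauses, and the theorem follows by the induction.
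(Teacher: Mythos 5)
Your proof is correct and is exactly the standard soundness induction that the paper leaves to the reader (it states only that the theorem ``is easy to establish'' and gives no proof). The one genuinely delicate point --- that the freshness conditions on $(\mathsf{F}R)$ and $(\Box R)$ must be read as covering the tree component $\mathcal{T}$ so that the modified valuation, respectively the extended assignment, refutes the premise without disturbing the $\mathcal{T}$-assignment condition or any side formula --- is the right one to isolate, and you handle it correctly.
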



Let us say that an inference rule is {\em height-preserving admissible} in $\mathsf{T}\mathbf{EFL}^{-}$ (or $\mathsf{T}\mathbf{EFL}$) if, whenever all uppersequents (premises) of the inference rule is provable by derivations with height no more than $n$, then the lowersequent (conclusion) of the rule is provable by a derivation whose height is at most $n$. By induction on height $n$ of a derivation, we can prove the following. 

\begin{proposition}
\label{prop:admissible}
\begin{itemize}
\item[$\mathrm{(i)}$] The following substitution rule $(\mathsf{sub})$ is height-preserving admissible in $\mathsf{T}\mathbf{EFL}^{-}$ and $\mathsf{T}\mathbf{EFL}$: 
\[
\infer[(\mathsf{sub})]{\Gamma\sigma \stackrel{\mathcal{T}\sigma}{\Rightarrow} \Delta\sigma}{\Gamma \stackrel{\mathcal{T}}{\Rightarrow} \Delta},
\]
where $\sigma$ is a uniform substitution, $\mathcal{T}\sigma$ is the resulting tree by substituting agent nominals in $\mathcal{T}$ by $\sigma$, $\Theta\sigma$ $:=$ $\inset{\alpha \sigma: \varphi\sigma }{\alpha:\varphi \in \Theta}$ and $\alpha \sigma \in \mathcal{T} \sigma$ is the corresponding label to $\alpha \in \mathcal{T}$ by $\sigma$. 

\item[$\mathrm{(ii)}$] The following weakening rules $(\mathsf{w}R)$ and $(\mathsf{w}L)$ are height-preserving admissible in $\mathsf{T}\mathbf{EFL}^{-}$ and $\mathsf{T}\mathbf{EFL}$. 
\[
\infer[(\mathsf{w}R)]{\Gamma \stackrel{\mathcal{T}}{\Rightarrow}  \Delta, \alpha:@_{n}\varphi}{\Gamma \stackrel{\mathcal{T}}{\Rightarrow}  \Delta},
\quad
\infer[(\mathsf{w}L)]{\alpha:@_{n}\varphi, \Gamma \stackrel{\mathcal{T}}{\Rightarrow}  \Delta}{\Gamma \stackrel{\mathcal{T}}{\Rightarrow}  \Delta}.
\]
\end{itemize}
\end{proposition}

\section{Semantic Completeness of Tree Sequent Calculus of Epistemic Logic of Friendship}
\label{sec:comp_tree}

In what follows in this section, sets $\Gamma$, $\Delta$, etc. of labelled formulas and a tree $\mathcal{T}$ of labels can be possibly (countably) infinite. Following this change, we say that a possibly infinite tree-sequent $\Gamma \stackrel{\mathcal{T}}{\Rightarrow} \Delta$ is provable in $\mathsf{T}\mathbf{EFL}^{-}$ if there exist finite sets $\Gamma' \subseteq \Gamma$ and $\Delta' \subseteq \Delta$ and finite subtree $\mathcal{T}'$ of $\mathcal{T}$ such that $\Gamma' \stackrel{\mathcal{T}'}{\Rightarrow} \Delta'$ is provable in $\mathsf{T}\mathbf{EFL}^{-}$. 

\begin{definition}[Saturated tree sequent]
\label{dfn:saturation}
A possibly infinite tree sequent $\Gamma \stackrel{\mathcal{T}}{\Rightarrow} \Delta$ is {\em saturated} if it satisfies the following conditions:
\begin{description}
\item[(rep1)] If $\alpha:@_{n}m \in \Gamma$ and $\alpha: \varphi[n/k] \in \Gamma$ then $\alpha: \varphi[m/k] \in \Gamma$. 
\item[(rep2)] If $\alpha:@_{m}n \in \Gamma$ and $\alpha: \varphi[n/k] \in \Gamma$ then $\alpha: \varphi[m/k] \in \Gamma$. 

\item[(ref$_{=}$)] $\alpha:@_{n}n \in \Gamma$ for all labels $\alpha \in \mathcal{T}$. 
\item[(rigid$_{=}$)] If $\alpha:@_{n}m \in \Gamma$ then $\beta:  @_{n}m \in \Gamma$ for all labels $\beta \in \mathcal{T}$. 
\item[($\to$r)] If $\alpha:@_{n}(\varphi \to \psi) \in \Delta$ then $\alpha:@_{n} \varphi \in \Gamma$ and $\alpha:@_{n} \psi \in \Delta$. 
\item[($\to$l)] If $\alpha:@_{n}(\varphi \to \psi) \in \Gamma$ then $\alpha:@_{n}\varphi \in \Delta$ or $\alpha:@_{n} \psi \in \Gamma$. 
\item[($@$r)] If $\alpha:@_{n}@_{m}\varphi \in \Delta$ then $\alpha:@_{m}\varphi \in \Delta$.
\item[($@$l)] If $\alpha:@_{n}@_{m}\varphi \in \Gamma$ then $\alpha:@_{m}\varphi \in \Gamma$.
\item[($\mathsf{F}$r)] If $\alpha:@_{n}\mathsf{F}\varphi \in \Delta$ then 
$\alpha:@_{n} \langle \mathsf{F} \rangle m \in \Gamma$ and $\alpha:@_{m}\varphi \in \Delta$ for some agent nominal $m$. 
\item[($\mathsf{F}$l)] If $\alpha:@_{n}\mathsf{F}\varphi \in \Gamma$ then $\alpha:@_{n}\langle \mathsf{F}\rangle m \in \Delta$ or $\alpha: @_{m}\varphi \in \Gamma$ for all agent nominals $m$. 
\item[($\Box$r)]  If $\alpha:@_{n}\Box \varphi \in \Delta$ then $\beta:@_{n}\varphi \in \Delta$ for some $n$-child $\beta$ of $\alpha$. 
\item[($\Box$l)]  If $\alpha:@_{n}\Box \varphi \in \Gamma$ then $\beta:@_{n}\varphi \in \Gamma$ for all $n$-children $\beta$ of $\alpha$. 
\end{description}
\end{definition}


\begin{lemma}[Saturation lemma]
\label{lem:saturation}
Let $\Gamma \stackrel{\mathcal{T}}{\Rightarrow} \Delta$ be an unprovable tree sequent in $\mathsf{T}\mathbf{EFL}^{-}$. Then, there exists a saturated $($possibly infinite$)$ sequent $\Gamma^{+} \stackrel{\mathcal{T}^{+}}{\Rightarrow} \Delta^{+}$ such that it is still unprovable in $\mathsf{T}\mathbf{EFL}^{-}$ and it extends the original tree sequent, i.e., $\Gamma \subseteq \Gamma^{+}$, $\Delta \subseteq \Delta^{+}$ and $\mathcal{T} \subseteq \mathcal{T}^{+}$. 
\end{lemma}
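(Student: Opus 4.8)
The plan is to construct $\Gamma^{+} \stackrel{\mathcal{T}^{+}}{\Rightarrow} \Delta^{+}$ by a standard stage-by-stage saturation process, fixing one violated closure condition at a time while maintaining the invariant that the current (finite or countable) tree sequent remains unprovable in $\mathsf{T}\mathbf{EFL}^{-}$. First I would set up bookkeeping: since $\mathsf{Form}$, $\mathsf{Nom}$, and the set of labels are all countable, enumerate in a fair (round-robin) fashion all ``saturation requirements'', i.e.\ all tuples $(\text{condition}, \text{relevant labelled formula(s)}, \text{relevant nominal where applicable})$ that could trigger one of the twelve clauses of Definition~\ref{dfn:saturation}. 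Start from $\Gamma_{0} = \Gamma$, $\Delta_{0} = \Delta$, $\mathcal{T}_{0} = \mathcal{T}$, and process the requirements one by one, producing an increasing chain $\Gamma_{0} \subseteq \Gamma_{1} \subseteq \cdots$, $\Delta_{0} \subseteq \Delta_{1} \subseteq \cdots$, $\mathcal{T}_{0} \subseteq \mathcal{T}_{1} \subseteq \cdots$, and finally take unions $\Gamma^{+} = \bigcup_{i} \Gamma_{i}$, etc.

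The key step is the single saturation step, and here each clause of Definition~\ref{dfn:saturation} corresponds to exactly one rule of Table~\ref{table:tree_seq_calc} read ``backwards'', so I would argue uniformly. For the conditions coming from invertible one-premise rules with no side condition --- $(\mathsf{rep}_{=1})$, $(\mathsf{rep}_{=2})$, $(\mathsf{ref}_{=})$, $(\mathsf{rigid}_{=})$, $(@R)$, $(@L)$, $(\to R)$, $(\Box L)$ --- if the current sequent already triggers the antecedent of the clause but not the consequent, I add the required labelled formula(s) to $\Gamma$ or $\Delta$; if the resulting sequent were provable in $\mathsf{T}\mathbf{EFL}^{-}$, then applying the corresponding rule (together with height-preserving admissibility of weakening, Proposition~\ref{prop:admissible}, to reconstruct the premise from a finite subsequent) would yield a proof of the current sequent, contradicting the invariant. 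For the branching rules $(\to L)$, $(\mathsf{F}L)$ I add one of the two disjuncts: if \emph{both} choices led to provable sequents, one application of the rule would prove the current sequent, so at least one choice preserves unprovability. For $(\mathsf{F}R)$ and $(\Box R)$, which have eigenvariable/freshness side conditions, I choose a \emph{fresh} agent nominal $m$ (resp.\ a fresh natural number $i$, yielding a new $n$-child $\alpha\cdot_{n}i$ that I add to the tree, keeping it a tree of labels) and add the witnessing formula(s); freshness is exactly what the side condition of the rule demands, so again provability of the extended sequent would back-translate into provability of the current one.

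Two routine but necessary observations complete the argument. First, the invariant is preserved in the limit: a proof of $\Gamma^{+} \stackrel{\mathcal{T}^{+}}{\Rightarrow} \Delta^{+}$ uses only finitely many labelled formulas and a finite subtree, hence lies inside some $\Gamma_{i} \stackrel{\mathcal{T}_{i}}{\Rightarrow} \Delta_{i}$ (using that $\mathcal{T}_{i}$ is cofinal among finite subtrees, or simply adding a finite-subtree stage via $(w\mathsf{lab})$ as needed), contradicting unprovability at stage $i$. Second, fairness of the enumeration guarantees that every requirement that is ever generated --- including those generated by formulas added at later stages --- is eventually processed, so $\Gamma^{+} \stackrel{\mathcal{T}^{+}}{\Rightarrow} \Delta^{+}$ satisfies all twelve clauses and $\mathcal{T}^{+}$ is a genuine tree of labels (closed under parents, since we only ever add children of existing labels). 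The main obstacle, and the only point requiring care, is the interaction of the freshness side conditions with fairness: adding a fresh witness for an $\mathsf{F}$- or $\Box$-formula creates new labels/nominals and hence new saturation requirements, so one must be sure the round-robin schedule revisits these; the standard device is to re-insert, after processing requirement number $k$, all newly-arisen requirements into the queue, which keeps the process fair while each individual stage stays finite.
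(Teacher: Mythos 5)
Your proposal is correct and is exactly the ``standard argument'' the paper invokes without spelling out: a fair round-robin saturation that fixes one violated clause at a time, preserving unprovability at each step (via the corresponding rule read backwards, choosing an unprovable branch for the two-premise rules and fresh witnesses for $(\mathsf{F}R)$/$(\Box R)$) and in the limit (via the paper's finiteness-based definition of provability for infinite tree sequents). You also correctly flag the one point needing care, namely that newly created labels and nominals generate new requirements that the schedule must revisit.
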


\begin{proof}
Let $\Gamma \stackrel{\mathcal{T}}{\Rightarrow} \Delta$ be an unprovable tree sequent in $\mathsf{T}\mathbf{EFL}^{-}$. Let $(\alpha_{i}: @_{{n}_{i}} \varphi_{i})_{i \in \omega}$ be an enumeration of all labelled formulas such that each labelled formula occurs infinitely often. In what follows, we inductively define a sequence $(\Gamma_{i} \stackrel{\mathcal{T}_{i}}{\Rightarrow} \Delta_{i})_{i \in \omega}$ of unprovable tree sequents in $\mathsf{T}\mathbf{EFL}^{-}$ such that $\Gamma_{i} \subseteq \Gamma_{i+1}$, $\Delta_{i} \subseteq \Delta_{i+1}$ and $\mathcal{T}_{i} \subseteq \mathcal{T}_{i+1}$ for all $i \in \omega$. 
\noindent \textbf{(Basis)} When $i=0$, a tree sequent $\Gamma_{0} \stackrel{\mathcal{T}_{0}}{\Rightarrow} \Delta_{0}$ is defined as the tree sequent $\Gamma \stackrel{\mathcal{T}}{\Rightarrow} \Delta$ which is clearly unprovable in $\mathsf{T}\mathbf{EFL}^{-}$. \\
\noindent \textbf{(Inductive Step)} Suppose that we have defined $(\Gamma_{i} \stackrel{\mathcal{T}_{i}}{\Rightarrow} \Delta_{i})_{0 \leqslant i \leqslant j}$. We define $\Gamma_{j+1} \stackrel{\mathcal{T}_{j+1}}{\Rightarrow} \Delta_{j+1}$ in the following two steps. 
\begin{description}
\item[Step 1: ] This step expands $\Gamma_{j}$ by the rules $(\mathsf{rep}_{=1})$, $(\mathsf{rep}_{=2})$, $(\mathsf{ref}_{=})$ and $(\mathsf{rigid}_{=})$ while $\Delta_{j}$ and $\mathcal{T}_{j}$ are unchanged. First, we enumerate all the finite pairs of the form 
\begin{center}
$(\alpha:@_{n}m, \alpha: \varphi[n/k])$ or $(\alpha:@_{m}n, \alpha: \varphi[n/k])$ 
\end{center}
found in $\Gamma_{j}$ and for each such pair we add $\alpha: \varphi[m/k]$ to $\Gamma_{j}$ to define the expanded set as $\Gamma_{j}^{\mathsf{rep}}$. It is easy to see that $\Gamma_{j}^{\mathsf{rep}} \stackrel{\mathcal{T}_{j}}{\Rightarrow} \Delta_{j}$ is unprovable in $\mathsf{T}\mathbf{EFL}^{-}$ by $(\mathsf{rep}_{=1})$ and $(\mathsf{rep}_{=2})$. 
Second, we define 
\[
\Gamma_{j}^{\mathsf{ref}} := \inset{\alpha: @_{n}n}{\alpha \in \mathcal{T}_{j} \text{ and $n$ occurs in $\Gamma_{j}^{\mathsf{rep}} \stackrel{\mathcal{T}_{j}}{\Rightarrow} \Delta_{j}$}}. 
\] 
It is immediate to see that $\Gamma_{j}^{\mathsf{ref}} \stackrel{\mathcal{T}_{j}}{\Rightarrow} \Delta_{j}$ is unprovable in $\mathsf{T}\mathbf{EFL}^{-}$ by $(\mathsf{ref}_{=})$. 
Finally we define $\Gamma_{j}^{=}$ := $\inset{\beta: @_{n}m}{\alpha: @_{n}m \in \Gamma_{j}^{\mathsf{ref}} \text{ and } \beta \in \mathcal{T}_{j}}$. 
Then the unprovability of $\Gamma_{j}^{=} \stackrel{\mathcal{T}_{j}}{\Rightarrow} \Delta_{j}$ in $\mathsf{T}\mathbf{EFL}^{-}$ is due to $(\mathsf{rigid}_{=})$. We note that $\Gamma_{j}^{=}$ is still finite.

\item[Step 2: ]  This step expands the unprovable tree sequent $\Gamma_{j}^{=} \stackrel{\mathcal{T}_{j}}{\Rightarrow} \Delta_{j}$ by logical rules, depending on the form of the $j$-th element $\alpha_{j}: @_{n_{j}} \varphi_{j}$ of our enumeration of labelled formulas. 

\begin{itemize}

\item Let $\varphi_{j}$ be of the form $\psi_{1} \to \psi_{2}$ and $\alpha_{j}: @_{n_{j}} (\psi_{1} \to \psi_{2}) \in \Gamma_{j}^{=}$. 
Then either 
\begin{center}
$\Gamma_{j}^{=} \stackrel{\mathcal{T}_{j}}{\Rightarrow} \Delta_{j}, \alpha_{j}: @_{n_{j}} \psi_{1}$ or $\alpha_{j}: @_{n_{j}} \psi_{2}, \Gamma_{j}^{=} \stackrel{\mathcal{T}_{j}}{\Rightarrow} \Delta_{j}$ 
\end{center}
is unprovable in $\mathsf{T}\mathbf{EFL}^{-}$ by $(\to L)$. We choose an unprovable tree sequent as $\Gamma_{j+1} \stackrel{\mathcal{T}_{j+1}}{\Rightarrow} \Delta_{j+1}$. 

\item Let $\varphi_{j}$ be of the form $\psi_{1} \to \psi_{2}$ and $\alpha_{j}: @_{n_{j}} (\psi_{1} \to \psi_{2}) \in \Delta_{j}$. Then 
\begin{center}
$\alpha_{j}: @_{n_{j}} \psi_{1}, \Gamma_{j}^{=} \stackrel{\mathcal{T}_{j}}{\Rightarrow} \Delta_{j}, \alpha_{j}: @_{n_{j}} \psi_{2}$ 
\end{center}
is unprovable in $\mathsf{T}\mathbf{EFL}^{-}$ by $(\to R)$ and it is chosen as $\Gamma_{j+1} \stackrel{\mathcal{T}_{j+1}}{\Rightarrow} \Delta_{j+1}$. 

\item Let $\varphi_{j}$ be of the form $@_{m}\psi$ and $\alpha_{j}: @_{n_{j}} @_{m}\psi \in \Gamma_{j}^{=}$. Then 
\begin{center}
$\alpha_{j}: @_{m} \psi_{1}, \Gamma_{j}^{=} \stackrel{\mathcal{T}_{j}}{\Rightarrow} \Delta_{j}$ 
\end{center}
is unprovable in $\mathsf{T}\mathbf{EFL}^{-}$ by $(@ L)$ and it is chosen as $\Gamma_{j+1} \stackrel{\mathcal{T}_{j+1}}{\Rightarrow} \Delta_{j+1}$. 

\item Let $\varphi_{j}$ be of the form $@_{m}\psi$ and $\alpha_{j}: @_{n_{j}} @_{m}\psi \in \Delta_{j}$. Then 
\begin{center}
$\Gamma_{j}^{=} \stackrel{\mathcal{T}_{j}}{\Rightarrow} \Delta_{j}, \alpha_{j}: @_{m} \psi_{1}$
\end{center}
is unprovable in $\mathsf{T}\mathbf{EFL}^{-}$ by $(@ R)$ and it is chosen as $\Gamma_{j+1} \stackrel{\mathcal{T}_{j+1}}{\Rightarrow} \Delta_{j+1}$. 

\item Let $\varphi_{j}$ be of the form $\mathsf{F}\psi$ and $\alpha_{j}: @_{n_{j}} \mathsf{F}\psi \in \Gamma_{j}^{=}$. Let $m_{0}$, $\ldots$, $m_{l}$ be all the finite agent nominals occuring in $\Gamma_{j}^{=} \stackrel{\mathcal{T}_{j}}{\Rightarrow} \Delta_{j}$. 
We define an increasing sequence $(\Gamma_{j}^{(i)} \stackrel{\mathcal{T}_{j}}{\Rightarrow} \Delta_{j}^{(i)})_{0 \leqslant i \leqslant l+1}$ of unprovable tree sequent in $\mathsf{T}\mathbf{EFL}^{-}$ as follows (it is noted that $\mathcal{T}_{j}$ is unchanged in this process). 
We put $\Gamma_{j}^{(i)} \stackrel{\mathcal{T}_{j}}{\Rightarrow} \Delta_{j}^{(i)}$ := $\Gamma_{j}^{=} \stackrel{\mathcal{T}_{j}}{\Rightarrow} \Delta_{j}$. Suppose that we have constructed $(\Gamma_{j}^{(i)} \stackrel{\mathcal{T}_{j}}{\Rightarrow} \Delta_{j}^{(i)})_{1 \leqslant i \leqslant h}$. Then either 
\begin{center}
$\Gamma_{j}^{(h)} \stackrel{\mathcal{T}_{j}}{\Rightarrow} \Delta_{j}^{(h)}, \alpha_{j}: @_{n_{j}}\langle \mathsf{F} \rangle m_{h}$ or $\alpha_{j}:  @_{m_{h}} \psi, \Gamma_{j}^{(h)} \stackrel{\mathcal{T}_{j}}{\Rightarrow} \Delta_{j}^{(h)}$ 
\end{center}
is unprovable in $\mathsf{T}\mathbf{EFL}^{-}$ by the rule $(\mathsf{F}L)$. We choose an unprovable tree sequent as $\Gamma_{j}^{(h+1)} \stackrel{\mathcal{T}_{j}}{\Rightarrow} \Delta_{j}^{(h+1)}$. Finally we define 
\begin{center}
$\Gamma_{j+1} \stackrel{\mathcal{T}_{j+1}}{\Rightarrow} \Delta_{j+1}$ := $\Gamma_{j}^{(l+1)} \stackrel{\mathcal{T}_{j}}{\Rightarrow} \Delta_{j}^{(l+1)}$. 
\end{center}

\item Let $\varphi_{j}$ be of the form $\mathsf{F}\psi$ and $\alpha_{j}: @_{n_{j}} \mathsf{F}\psi \in \Delta_{j}$. Let $m$ be a fresh agent nominal not occuring in $\Gamma_{j}^{=} \stackrel{\mathcal{T}_{j}}{\Rightarrow} \Delta_{j}$ and define 
\begin{center}
$\Gamma_{j+1} \stackrel{\mathcal{T}_{j+1}}{\Rightarrow} \Delta_{j+1}$ := $@_{n_{j}}\langle \mathsf{F} \rangle m, \Gamma_{j}^{=} \stackrel{\mathcal{T}_{j}}{\Rightarrow} \Delta_{j}, \alpha_{j}: @_{m} \psi$, 
\end{center}
whose unprovability in $\mathsf{T}\mathbf{EFL}^{-}$ is assured by the rule $(\mathsf{F}R)$.

\item Let $\varphi_{j}$ be of the form $\Box\psi$ and $\alpha_{j}: @_{n_{j}} \Box\psi \in \Gamma_{j}^{=}$. Let us enumerate all finite $n_{j}$-children of $\alpha_{j}$ in $\mathcal{T}_{j}$ as $\beta_{1}$, $\ldots$, $\beta_{h}$ and define $\Gamma_{j+1} \stackrel{\mathcal{T}_{j+1}}{\Rightarrow} \Delta_{j+1}$ as $\beta_{1}:@_{n_{j}} \psi, \ldots, \beta_{h}:@_{n_{j}} \psi, \Gamma_{j}^{=} \stackrel{\mathcal{T}_{j}}{\Rightarrow} \Delta_{j}$, which is unprovable in $\mathsf{T}\mathbf{EFL}^{-}$ by the rule $(\Box L)$.

\item Let $\varphi_{j}$ be of the form $\Box\psi$ and $\alpha_{j}: @_{n_{j}} \Box\psi \in \Delta_{j}$. Let $\beta$ be a fresh label not occuring in $\Gamma_{j}^{=} \stackrel{\mathcal{T}_{j}}{\Rightarrow} \Delta_{j}$ such that $\beta$ is an $n_{j}$-child of $\beta$, and define 
\begin{center}
$\Gamma_{j+1} \stackrel{\mathcal{T}_{j+1}}{\Rightarrow} \Delta_{j+1}$ := $\Gamma_{j}^{=} \stackrel{\mathcal{T}_{j} \cup \setof{\beta}}{\Rightarrow} \Delta_{j}, \beta: @_{n_{j}}\psi$, 
\end{center}
whose unprovability in $\mathsf{T}\mathbf{EFL}^{-}$ is assured by the rule $(\Box R)$.

\item Otherwise, $\Gamma_{j+1} \stackrel{\mathcal{T}_{j+1}}{\Rightarrow} \Delta_{j+1}$ is defined as $\Gamma_{j}^{=} \stackrel{\mathcal{T}_{j}}{\Rightarrow} \Delta_{j}$.  

\end{itemize}
\end{description}
We have finished defined a sequence $(\Gamma_{i} \stackrel{\mathcal{T}_{i}}{\Rightarrow} \Delta_{i})_{i \in \omega}$. We define $\Gamma^{+}$ := $\bigcup_{i \in \omega} \Gamma_{i}$, $\mathcal{T}^{+}$ := $\bigcup_{i \in \omega} \mathcal{T}_{i}$ and $\Delta^{+}$ := $\bigcup_{i \in \omega} \Delta_{i}$. Then it is easy to see that $\Gamma^{+} \stackrel{\mathcal{T}^{+}}{\Rightarrow} \Delta^{+}$ is a saturated sequent (we note that the rule $(w\mathsf{lab})$ is needed here). 
\qed
\end{proof}

\begin{lemma}
\label{lem:truth}
Let $\Gamma \stackrel{\mathcal{T}}{\Rightarrow} \Delta$ be a saturated and unprovable tree sequent in $\mathsf{T}\mathbf{EFL}^{-}$. Define the derived model $\mathfrak{M}$ = $(\mathcal{T}, A, (R_{a})_{a \in A}, (\asymp_{\alpha})_{\alpha \in \mathcal{T}},V)$ from $\Gamma \stackrel{\mathcal{T}}{\Rightarrow} \Delta$ by:
\begin{itemize}
\item $A$ $:=$ $\inset{|n|}{\text{ $n$ is an agent nominal }}$, where $|n|$ is an equivalence class of an equivalence relation $\sim$ which is defined as: $n \sim m$ $\iff$ $\alpha:@_{n}m \in \Gamma$ for some $\alpha \in \mathcal{T}$. 
\item $\alpha R_{|n|} \beta$ iff $\beta$ is an $m$-child of $\alpha$ for some $m \in |n|$. 
\item $|n| \asymp_{\alpha} |m|$ iff $\alpha: @_{n} \langle \mathsf{F} \rangle m \in \Gamma$. 
\item $(\alpha, |n|) \in V(m)$ iff $\alpha: @_{n} m \in \Gamma$ $(m \in \mathsf{Nom})$.
\item $(\alpha, |n|) \in V(p)$ iff $\alpha: @_{n} p \in \Gamma$ $(p \in \mathsf{Prop})$.
\end{itemize}
Then, $\mathfrak{M}$ is a model. Moreover, for every labelled formula $\alpha:@_{n}\varphi$, we have:
\begin{itemize}
\item[$(\mathrm{i})$] If $\alpha:@_{n}\varphi \in \Gamma$ then $\mathfrak{M},(\alpha,|n|) \models \varphi$;  
\item[$(\mathrm{ii})$] If $\alpha:@_{n}\varphi \in \Delta$ then $\mathfrak{M},(\alpha,|n|) \not\models \varphi$.
\end{itemize}
\end{lemma}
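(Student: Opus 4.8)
The plan is to prove the two claims (i) and (ii) simultaneously by induction on the structure of the formula $\varphi$, after first checking that the derived model $\mathfrak{M}$ really is a model in the sense of Section~\ref{sec:syn_sem}. For the latter, I would verify that $\sim$ is an equivalence relation: reflexivity follows from (ref$_{=}$) together with the definition (note $\alpha:@_nn \in \Gamma$ for any $\alpha \in \mathcal{T}$, and $\mathcal{T}$ is non-empty since it contains a root), symmetry and transitivity follow from (rep1)/(rep2) applied to suitable instances $\varphi[n/k] = @_k m$ or $@_m k$, using that $\alpha:@_nn \in \Gamma$ as a seed. I also need $R_{|n|}$ and $\asymp_\alpha$ to be well-defined on equivalence classes; for $R_{|n|}$ this is immediate from the definition quantifying over $m \in |n|$, and for $\asymp_\alpha$ it follows from (rep1)/(rep2) and (rigid$_{=}$). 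Finally I must check $V(m)$ has the required form $\mathcal{T} \times \{a\}$ for a unique $a$: I would take $a = |m|$ and use (rigid$_{=}$) to get $W$-independence and (rep1)/(rep2) plus the fact that $\alpha:@_mm \in \Gamma$ (so $(\alpha,|m|) \in V(m)$ always) to pin down uniqueness, together with $\sim$ being an equivalence to see $(\alpha,|n|) \in V(m)$ iff $|n| = |m|$.

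For the induction, the base cases are $\bot$, $p \in \mathsf{Prop}$, and $m \in \mathsf{Nom}$. The case $\bot$: if $\alpha:@_n\bot \in \Delta$ then $\mathfrak{M},(\alpha,|n|) \not\models \bot$ trivially; if $\alpha:@_n\bot \in \Gamma$ then the sequent would be provable by $(\bot)$, contradicting unprovability, so that case is vacuous. The cases $p$ and $m$ are immediate from the definition of $V$ for (i), and for (ii) I use that unprovability together with $(\mathsf{id})$ prevents $\alpha:@_np$ from lying in both $\Gamma$ and $\Delta$ — more precisely, if $\alpha:@_np \in \Delta$ then $\alpha:@_np \notin \Gamma$ (else $(\mathsf{id})$ applies), hence $(\alpha,|n|) \notin V(p)$; the nominal case is analogous but also needs (rep1)/(rep2)/(rigid$_{=}$) to relate $\alpha:@_nm \in \Gamma$ to membership in $V(m)$, which is exactly the well-definedness discussion above. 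The inductive cases for $\to$, $@$, $\mathsf{F}$, $\Box$ each split into a $\Gamma$-part and a $\Delta$-part and are driven by the matching saturation clause: e.g. for ($\Box$l)/($\Box$r), if $\alpha:@_n\Box\varphi \in \Gamma$, then for every $n$-child $\beta$ of $\alpha$ we have $\beta:@_n\varphi \in \Gamma$ by ($\Box$l), hence $\mathfrak{M},(\beta,|n|) \models \varphi$ by IH; since every $v$ with $\alpha R_{|n|} v$ is by definition an $m$-child of $\alpha$ for some $m \sim n$, and $\beta:@_n\varphi \in \Gamma$ transfers to $\beta:@_m\varphi \in \Gamma$ via (rep1)/(rep2) (or one observes directly that the semantics only sees $\underline{n} = |n| = |m|$), we get $\mathfrak{M},(\alpha,|n|) \models \Box\varphi$. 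The $\Diamond$-direction of $\Box$ and the existential direction of $\mathsf{F}$ use the witness provided by ($\Box$r), resp. ($\mathsf{F}$r).

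The main obstacle is the interaction between the equivalence classes $|n|$ on agents and the labelled formulas, which always carry a concrete nominal rather than a class — in particular in the $\Box$ and $\mathsf{F}$ clauses where the relations $R_{|n|}$ and $\asymp_\alpha$ are defined in terms of classes while saturation speaks of specific nominals. I would isolate this once and for all as a preliminary lemma: if $n \sim m$ and $\alpha:@_n\varphi \in \Gamma$ (resp. $\in \Delta$) then $\alpha:@_m\varphi \in \Gamma$ (resp. $\in \Delta$). For $\Gamma$ this follows from (rep1)/(rep2) taking the substitution instance with $\varphi' = @_k\varphi$ or by iterating the replacement clauses on the subformula structure; for $\Delta$ it requires a small additional argument, observing that if $\alpha:@_m\varphi \in \Delta$ failed one could not immediately derive a contradiction, so in fact I would instead route the $\Delta$-side of the $\Box$/$\mathsf{F}$ existential steps through the concrete witness nominal supplied by saturation and only invoke the $\sim$-transfer lemma on the $\Gamma$-side, which is all that is needed. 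The second delicate point is checking $\alpha R_{|n|} v$ is correctly characterised: since labels are built with concrete nominals $\alpha\cdot_m i$, I need $\beta$ being an $n$-child of $\alpha$ to imply $\alpha R_{|n|}\beta$ (immediate) and conversely $\alpha R_{|n|}\beta$ to give me an $n$-child in $\Gamma$'s bookkeeping, which again is exactly the $\sim$-transfer lemma applied to the relevant labelled formula. Everything else is routine bookkeeping following the saturation clauses.
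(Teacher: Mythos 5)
Your strategy coincides with the paper's: verify well-definedness of $\sim$, $R_{|n|}$, $\asymp_{\alpha}$ and $V$ from the saturation conditions, then prove (i) and (ii) by induction on $\varphi$ driven by the matching saturation clauses, with the nominal-versus-class mismatch handled by a replacement argument; your $\sim$-transfer lemma (from (rep1)/(rep2) and (rigid$_{=}$), instantiating the replacement clause at $@_{k}\varphi$ for fresh $k$) is exactly the right auxiliary fact. However, the one inductive step you spell out, $(\Box\mathbf{l})$, is misapplied as written: applying $(\Box\mathbf{l})$ to $\alpha:@_{n}\Box\varphi \in \Gamma$ only yields $\beta:@_{n}\varphi \in \Gamma$ for the \emph{$n$-children} $\beta$ of $\alpha$, whereas $\alpha R_{|n|}\beta$ only guarantees that $\beta$ is an $m$-child for some $m \sim n$ with possibly $m \neq n$; for such a $\beta$ you do not yet have $\beta:@_{n}\varphi \in \Gamma$, so there is nothing to ``transfer to $\beta:@_{m}\varphi$.'' The repair---and what the paper does---is to apply the transfer one level up and one step earlier: from $m \sim n$ obtain $\gamma:@_{n}m \in \Gamma$ for some $\gamma$, move it to $\alpha$ by (rigid$_{=}$), conclude $\alpha:@_{m}\Box\varphi \in \Gamma$ by (rep$_{1}$), and only then invoke $(\Box\mathbf{l})$ for the nominal $m$ to get $\beta:@_{m}\varphi \in \Gamma$, after which the induction hypothesis and $|m| = |n|$ close the case. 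Since this is precisely your preliminary lemma applied to $\Box\varphi$ at $\alpha$ rather than to $\varphi$ at $\beta$, the gap is minor and closable with the tools you already set up; the rest of the outline matches the paper's proof.
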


\begin{proof}
First, let us check that $\mathfrak{M}$ is a model. First of all, note that we can easily verify that $\sim$ is an equivalence relation by the conditions $(\mathbf{ref}_{=})$, $(\mathbf{rep}_{i})$ and $(\mathbf{rigid}_{=})$ of Definition \ref{dfn:saturation}. 
We can also check that if $n \sim m$ then $R_{|n|}$ = $R_{|m|}$ and that if $n \sim n'$ and $m \sim m'$ then $\alpha:@_{n}\langle \mathsf{F}\rangle m \in \Gamma$ iff $\alpha:@_{n'}\langle \mathsf{F}\rangle m' \in \Gamma$. So both of $R_{|n|}$ and $\asymp_{\alpha}$ are well-defined. As for the valuation of propositional variables, when $n \sim m$ holds, the equivalence between $\alpha: @_{n}p \in \Gamma$ and $\alpha: @_{m}p \in \Gamma$ holds by the saturation conditions $(\mathbf{rep}_{1})$ and $(\mathbf{rep}_{2})$. For the valuation for agent nominals $m$, we need to check that $\inset{(\alpha,|n|)}{\alpha: @_{n} m \in \Gamma}$ is $\mathcal{T} \times \setof{|m|}$. But this is clear from the saturation condition $(\mathbf{rigid}_{=})$ and the fact that $\sim$ is an equivalence relation. 

Now we move to check items (i) and (ii) by induction on $\varphi$. We only check the cases where $\varphi$ is of the form: $m$, $\bot$ or $\mathsf{F}\varphi$ or $\Box \varphi$, since the other cases are easy to establish by the corresponding saturation conditions of Definition \ref{dfn:saturation}. 
\begin{itemize}
\item  Let $\varphi$ be of the form $m$. For (i), suppose that $\alpha:@_{n} m \in \Gamma$. This means that $|n|$ = $|m|$. Since $V(m)$ = $\mathcal{T} \times \setof{|m|}$, we have $\mathfrak{M},(\alpha, |n|) \models m$, as desired. For (ii), assume that $\alpha:@_{n} m \in \Delta$ and suppose for contradiction that $\mathfrak{M},(\alpha,|n|) \models m$, i.e., $|n|$ = $|m|$. It follows from $|n|$ = $|m|$ and the saturation condition $(\mathbf{rigid}_{=})$ that $\alpha:@_{n}m \in \Gamma$. This is a contradiction with the unprovability of $\Gamma \stackrel{\mathcal{T}}{\Rightarrow} \Delta$ in $\mathsf{T}\mathbf{EFL}^{-}$. Therefore, we conclude that $\mathfrak{M},(\alpha,|n|) \not\models m$. 

\item  Let $\varphi$ be of the form $\bot$. Since $\Gamma \stackrel{\mathcal{T}}{\Rightarrow} \Delta$ is unprovable in $\mathsf{T}\mathbf{EFL}^{-}$, it is impossible to have $\alpha:@_{n}\bot \in \Gamma$, (i) trivially holds. Since $\mathfrak{M},(\alpha,|n|) \not\models \bot$ always holds, (ii) also holds. 

\item Let $\varphi$ be of the form $\mathsf{F} \varphi$. 
For (i), assume that $\alpha:@_{n} \mathsf{F} \varphi \in \Gamma$. 
We need to show $\mathfrak{M}, (\alpha,|n|)\models \mathsf{F} \varphi$, 
so let us fix any agent nominal $m$ such that $|n| R_{\alpha} |m|$. 
Our goal is to show $\mathfrak{M}, (\alpha,|m|) \models \varphi$. 
From $|n| R_{\alpha} |m|$, we get $\alpha: @_{n}\langle \mathsf{F} \rangle m \in \Gamma$ hence $\alpha: @_{n}\langle \mathsf{F} \rangle m \notin \Delta$ by the unprovability of $\Gamma \stackrel{\mathcal{T}}{\Rightarrow} \Delta$. By the condition $(\mathsf{F}\mathbf{l})$, we obtain $\alpha:@_{m}\varphi \in \Gamma$, which implies our goal by induction hypothesis. 

For (ii), assume that $\alpha:@_{n} \mathsf{F} \varphi \in \Delta$. By the saturation condition $(\mathsf{F}\mathbf{r})$, we have that $\alpha:@_{n}\langle \mathsf{F} \rangle m \in \Gamma$ and $\alpha:@_{m} \varphi \in \Delta$ for some agent nominal $m$.  With the help of induction hypothesis, we have $|n| R_{\alpha} |m|$ and $\mathfrak{M}, (\alpha,|m|) \not\models \varphi$ for some agent nominal $m$. Hence $\mathfrak{M}, (\alpha,|n|) \not\models \mathsf{F} \varphi$, as desired. 

\item Let $\varphi$ be of the form $\Box \varphi$. To show (i), assume that $\alpha:@_{n}\Box \varphi \in \Gamma$. We need to show $\mathfrak{M}, (\alpha,|n|)\models \Box \varphi$, so let us fix any label $\beta$ such that $\alpha R_{|n|} \beta$. Our goal is to show $\mathfrak{M}, (\beta,|n|) \models \varphi$. By $\alpha R_{|n|} \beta$, we can find an agent nominal $m \in |n|$ such that $\beta$ is an $m$-child of $\alpha$. It follows from $m \in |n|$ that $\gamma:@_{n}m \in \Gamma$ for some label $\gamma$. By $\alpha:@_{n}\Box \varphi \in \Gamma$ and $\gamma:@_{n}m \in \Gamma$, the saturation condition ($\mathbf{rep}_{1}$) implies that $\alpha:@_{m}\Box \varphi \in \Gamma$. By the saturation condition  ($\Box \mathbf{l}$) and the fact that $\beta$ is an $m$-child of $\alpha$, we obtain $\beta:@_{m} \varphi \in \Gamma$. By induction hypothesis, $\mathfrak{M}, (\beta,|m|) \models \varphi$ hence we obtain our goal by $|m|$ = $|n|$. This finishes to show (i). 

For (ii), assume that $\alpha:@_{n}\Box \varphi \in \Delta$. By the saturation condition $(\Box \mathbf{r})$, we have that $\beta:@_{n}\varphi \in \Delta$ for some $n$-child $\beta$ of $\alpha$, i.e., $\alpha R_{|n|}\beta$. By induction hypothesis, $\mathfrak{M},(\beta,|n|) \not\models \varphi$. So we conclude that $\mathfrak{M},(\alpha ,|n|) \not\models \Box \varphi$. \qed
\end{itemize}
\end{proof}

\begin{theorem}[Completeness of cut-free $\mathsf{T}\mathbf{EFL}^{-}$]
\label{thm:comp_tree_seq4efl}
If $\mathfrak{M},f \models \Gamma \stackrel{\mathcal{T}}{\Rightarrow} \Delta$ for all models $\mathfrak{M}$ and all assignments $f$, then $\Gamma \stackrel{\mathcal{T}}{\Rightarrow} \Delta$ is provable in $\mathsf{T}\mathbf{EFL}^{-}$.
\end{theorem}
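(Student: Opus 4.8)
The plan is to prove the contrapositive: assuming $\Gamma \stackrel{\mathcal{T}}{\Rightarrow} \Delta$ is \emph{not} provable in $\mathsf{T}\mathbf{EFL}^{-}$, I will exhibit a single model $\mathfrak{M}$ and a single $\mathcal{T}$-assignment $f$ in $\mathfrak{M}$ with $\mathfrak{M},f \not\models \Gamma \stackrel{\mathcal{T}}{\Rightarrow} \Delta$, which contradicts the hypothesis. Both ingredients are already available: the saturation lemma (Lemma \ref{lem:saturation}) and the truth lemma (Lemma \ref{lem:truth}).

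First I would apply Lemma \ref{lem:saturation} to the unprovable sequent $\Gamma \stackrel{\mathcal{T}}{\Rightarrow} \Delta$, obtaining a saturated, still unprovable, possibly infinite tree sequent $\Gamma^{+} \stackrel{\mathcal{T}^{+}}{\Rightarrow} \Delta^{+}$ with $\Gamma \subseteq \Gamma^{+}$, $\Delta \subseteq \Delta^{+}$ and $\mathcal{T} \subseteq \mathcal{T}^{+}$. Then I would let $\mathfrak{M}$ be the derived model built from $\Gamma^{+} \stackrel{\mathcal{T}^{+}}{\Rightarrow} \Delta^{+}$ exactly as in Lemma \ref{lem:truth}; recall that its state space is $\mathcal{T}^{+}$ itself. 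For the assignment I would take $f$ to be the identity map on labels, restricted to $\mathcal{T}$, i.e. $f(\alpha) = \alpha$ for every $\alpha \in \mathcal{T}$, regarded inside $W = \mathcal{T}^{+}$. To check that $f$ is a legitimate $\mathcal{T}$-assignment in $\mathfrak{M}$, suppose $\beta$ is an $n$-child of $\alpha$ in $\mathcal{T}$; since $\mathcal{T} \subseteq \mathcal{T}^{+}$, it is also an $n$-child of $\alpha$ in $\mathcal{T}^{+}$, so by the very definition of $R_{|n|}$ in the derived model we have $\alpha R_{|n|} \beta$, that is, $f(\alpha) R_{\underline{n}} f(\beta)$.

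Finally I would simply unwind the definitions. Every labelled formula of $\Gamma$ lies in $\Gamma^{+}$, hence by Lemma \ref{lem:truth}(i) it is true at $(\mathfrak{M},f)$; every labelled formula of $\Delta$ lies in $\Delta^{+}$, hence by Lemma \ref{lem:truth}(ii) it is false at $(\mathfrak{M},f)$. Thus all labelled formulas of $\Gamma$ are true while none of $\Delta$ is true at $(\mathfrak{M},f)$, which is precisely $\mathfrak{M},f \not\models \Gamma \stackrel{\mathcal{T}}{\Rightarrow} \Delta$, contradicting the assumption and completing the proof. Note that nothing extra is needed to ensure $\mathfrak{M}$ is a genuine model: $\mathbf{EFL}$ imposes no frame conditions on the $R_{a}$ or $\asymp_{w}$, and that $\mathfrak{M}$ is a model was already discharged inside the proof of Lemma \ref{lem:truth}. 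The only mildly delicate point — and the one I would expect to be the main obstacle worth a remark — is the bookkeeping around infiniteness: saturation may enlarge $\mathcal{T}$ to an infinite tree and $\Gamma^{+},\Delta^{+}$ to infinite sets, so the derived model may be infinite; but this causes no difficulty, since provability of an infinite tree sequent was defined via finite subsequents and no compactness argument is required. Everything else is routine.
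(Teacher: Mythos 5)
Your proposal is correct and follows essentially the same route as the paper's own proof: contraposition, saturation via Lemma \ref{lem:saturation}, the derived model of Lemma \ref{lem:truth}, and the identity assignment. Your explicit check that the identity map is a legitimate $\mathcal{T}$-assignment is a detail the paper leaves implicit, but the argument is the same.
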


\begin{proof}
Suppose for contradiction that $\Gamma \stackrel{\mathcal{T}}{\Rightarrow} \Delta$ is unprovable in $\mathsf{T}\mathbf{EFL}^{-}$. By Lemma \ref{lem:saturation}, we can extend this tree sequent into a saturated (possibly infinite) tree sequent $\Gamma^{+} \stackrel{\mathcal{T}^{+}}{\Rightarrow} \Delta^{+}$ which is still unprovable in $\mathsf{T}\mathbf{EFL}^{-}$. 
Let $\mathfrak{M}$ be the derived model from $\Gamma^{+} \stackrel{\mathcal{T}^{+}}{\Rightarrow} \Delta^{+}$. Let us define $f: \mathcal{T} \to \mathcal{T}$ as the identity mapping. Then it follows from Lemma \ref{lem:truth} that $\mathfrak{M},f\not\models \Gamma \Rightarrow \Delta$, as required. \qed
\end{proof}

By Theorems \ref{thm:sound_tree_seq4efl} and \ref{thm:comp_tree_seq4efl}, the cut elimination theorem of $\mathsf{T}\mathbf{EFL}$ follows. 

\begin{corollary}
\label{cor:sum_tree}
The following are all equivalent: 

\begin{enumerate}
\item $\mathfrak{M},f \models \Gamma \stackrel{\mathcal{T}}{\Rightarrow} \Delta$ for all models $\mathfrak{M}$ and all assignments $f$. 
\item $\Gamma \stackrel{\mathcal{T}}{\Rightarrow} \Delta$ is provable in $\mathsf{T}\mathbf{EFL}^{-}$.
\item $\Gamma \stackrel{\mathcal{T}}{\Rightarrow} \Delta$ is provable in $\mathsf{T}\mathbf{EFL}$.
\end{enumerate}
Therefore, $\mathsf{T}\mathbf{EFL}$ enjoys the cut-elimination theorem. 
\end{corollary}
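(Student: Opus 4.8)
The plan is to close a short cycle of implications among the three statements using the two theorems already proved, and then read off cut elimination as an immediate consequence. First I would note that the implication from (2) to (3) is trivial: every derivation in $\mathsf{T}\mathbf{EFL}^{-}$ is, by definition, also a derivation in $\mathsf{T}\mathbf{EFL}$, since the latter has strictly more rules, namely $(Cut)$. Next, the implication from (3) to (1) is exactly the content of Theorem~\ref{thm:sound_tree_seq4efl}, whose statement already concerns provability in the full system $\mathsf{T}\mathbf{EFL}$ with cut; so nothing new is needed, although it is worth recalling that the soundness argument must in particular check that $(Cut)$ preserves truth in $(\mathfrak{M},f)$. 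That step is routine: if $\mathfrak{M},f \models \Gamma \stackrel{\mathcal{T}}{\Rightarrow} \Delta, \alpha{:}@_{n}\varphi$ and $\mathfrak{M},f \models \alpha{:}@_{n}\varphi, \Pi \stackrel{\mathcal{T}}{\Rightarrow} \Sigma$, then a case split on whether $\alpha{:}@_{n}\varphi$ is true at $(\mathfrak{M},f)$ yields $\mathfrak{M},f \models \Gamma,\Pi \stackrel{\mathcal{T}}{\Rightarrow} \Delta,\Sigma$. Finally, the implication from (1) to (2) is precisely Theorem~\ref{thm:comp_tree_seq4efl}.

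Having the cycle $(1)\Rightarrow(2)\Rightarrow(3)\Rightarrow(1)$, the three statements are equivalent. For the concluding assertion on cut elimination I would argue as follows: suppose $\Gamma \stackrel{\mathcal{T}}{\Rightarrow} \Delta$ is provable in $\mathsf{T}\mathbf{EFL}$, i.e.\ statement (3) holds; then by the equivalence statement (2) holds, so the very same tree sequent is already provable in the cut-free system $\mathsf{T}\mathbf{EFL}^{-}$. Hence every tree sequent provable with cut is provable without cut, which is the cut elimination theorem.

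I do not anticipate any genuine obstacle here: all the substantive work was carried out in Sections~\ref{sec:tree_seq} and~\ref{sec:comp_tree}, and this corollary is just the standard ``detour through semantics'' packaging of those results. The only points needing minor care are bookkeeping ones: confirming that the soundness theorem as stated really does cover the cut rule (it does, being phrased for $\mathsf{T}\mathbf{EFL}$), and that the completeness theorem genuinely delivers a cut-free derivation (it does, being phrased for $\mathsf{T}\mathbf{EFL}^{-}$). No new lemmas or constructions are required.
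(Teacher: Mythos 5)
Your argument is correct and is exactly the paper's intended reasoning: the corollary is obtained by closing the cycle $(1)\Rightarrow(2)$ (Theorem~\ref{thm:comp_tree_seq4efl}), $(2)\Rightarrow(3)$ (trivial inclusion of systems), and $(3)\Rightarrow(1)$ (Theorem~\ref{thm:sound_tree_seq4efl}), with cut elimination read off from $(3)\Rightarrow(2)$. No further comment is needed.
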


\section{Hilbert System of Epistemic Logic of Friendship}
\label{sec:hil_helf}

This section provides a Hilbert system of the epistemic logic of friendship by ``translating'' a tree sequent into a formula in $\mathcal{L}$. 
First of all, let us introduce the notion of {\em necessity form}, originally proposed in~\cite{Goldblatt1982} by Goldblatt and used also in~\cite{BT1998,GPT1987}. 
Necessity forms are employed to formulate an inference rule of our Hilbert system. 

\begin{definition}[Necessity form]
Fix an arbitrary symbol $\#$ not occurring in the syntax $\mathcal{L}$. A {\em necessity form} is defined inductively as follows: $(\mathrm{i})$ $\#$ is a necessity form; $(\mathrm{ii})$ If $L$ is a necessity form and $\varphi$ is a formula, then $\varphi \to L$ is also a necessity form; $(\mathrm{iii})$ If $L$ is a necessity form and $n$ is an agent nominal, then $@_{n}\Box L$ is also a necessity form. Given a necessity form $L(\#)$ and a formula $\varphi$ of $\mathcal{L}$, we use $L(\varphi)$ to denote the formula obtained by replacing the unique occurrence of $\#$ in $L$ by the formula $\varphi$. 
\end{definition}

When $L(\#)$ is a necessity form of $\psi_{0} \to @_{n}\Box (\psi_{1} \to @_{m}\Box (\psi_{2} \to \#))$, then $L(\varphi)$ is $\psi_{0} \to @_{n}\Box (\psi_{1} \to @_{m}\Box (\psi_{2} \to \varphi))$. Intuitively, this notion allows us to capture the unique path from a label in a tree of a tree sequent to the root label of the tree. 

\begin{table}[htbp]
\hrule
\vspace{1pt}
\caption{Hilbert System $\mathsf{H}\mathbf{EFL}$}
\label{table:hilbert}
\begin{center}
\begin{tabular}{llll}
(\texttt{Taut}) & all propositional tautologies & (\texttt{MP})& From $\varphi$ and $\varphi \to \psi$, infer $\psi$ \\
(\texttt{K}$_{\Box}$)& $\Box (\varphi \to \psi) \to (\Box \varphi \to \Box \psi)$ & (\texttt{Nec}$_{\Box}$)& From $\varphi$, infer $\Box \varphi$ \\
(\texttt{K}$_{\mathsf{F}}$)& $\mathsf{F} (\varphi \to \psi) \to (\mathsf{F} \varphi \to \mathsf{F} \psi)$ & (\texttt{Nec}$_{\mathsf{F}}$)& From $\varphi$, infer $\mathsf{F} \varphi$ \\
(\texttt{K}$_@$) &  $@_{n} (\varphi \to \psi) \to (@_{n} \varphi \to @_{n} \psi)$ & (\texttt{Nec}$_{@}$)& From $\varphi$, infer $@_{n} \varphi$ \\
(\texttt{Ref}) & $@_{n}n$  & (\texttt{Selfdual}) & $\neg @_{n} \varphi \leftrightarrow @_{n}\neg \varphi$ \\
(\texttt{Elim}) & $@_{n}\varphi \to (n \to \varphi)$ & (\texttt{Agree}) & $@_{n}@_{m}\varphi \to @_{m}\varphi$  \\
(\texttt{Back}) & $@_{n}\varphi \to \mathsf{F}@_{n}\varphi$ & (\texttt{DCom}$@\Box$) &  $@_{n} \Box @_{n} \varphi \leftrightarrow  @_{n} \Box \varphi$ \\
(\texttt{Rigid}$_{=}$) & $@_{n}m \to \Box @_{n}m$ & (\texttt{Rigid}$_{\neq}$) & $\neg @_{n}m \to \Box \neg @_{n}m$ \\
(\texttt{Name}) &\multicolumn{3}{l}{ From $n \to \varphi$, infer $\varphi$, 
where $n$ is fresh in $\varphi$. }\\
($L$(\texttt{BG}))& \multicolumn{3}{l}{ From $L(@_{n} \langle \mathsf{F} \rangle m \to @_{m}\varphi)$, infer $L(@_{n} \mathsf{F} \varphi)$, where $m$ is fresh in $L(@_{n} \mathsf{F} \varphi)$.}\\
\end{tabular}
\end{center}
\hrule
\end{table}

Table \ref{table:hilbert} presents our Hilbert system $\mathsf{H}\mathbf{EFL}$. The underlying idea of the system is the following. On the top of the propositional part ($\texttt{Taut}$ and $\texttt{MP}$), we combine the axiomatization of modal logic $\mathbf{K}$ for the modal operator $\Box$ and the axiomatization of a basic hybrid logic $\mathbf{K}_{\mathcal{H}(@)}$ (see~\cite{BC2006,BdRV2001}) for the modal operator $\mathsf{F}$, with some modification (we need to modify $\mathbf{BG}$, the rule of {\em bounded generalization}, with the help of necessity forms), and then we add three interaction axioms: (\texttt{Rigid}$_{=}$), (\texttt{Rigid}$_{\neq}$), and (\texttt{DCom}$@\Box$). We note that the axiom (\texttt{DCom}$@\Box$) is also used for axiomatizing the {\em dependent product} of two hybrid logics in~\cite{Sano2010a}. Let us define the notion of provability in $\mathsf{H}\mathbf{EFL}$ in as usual. We write $\vdash_{\mathsf{H}\mathbf{EFL}} \varphi$ to means that $\varphi$ is provable in $\mathsf{H}\mathbf{EFL}$. 
\footnote{
By (\texttt{K})-rules and (\texttt{Nec})-rules for operators $\Box$, $\mathsf{F}$ and $@_{n}$, the replacement of equivalence holds in $\mathsf{H}\mathbf{EFL}$. }
\footnote{Given a set $\Gamma \cup \setof{\varphi}$ of formulas, we say that $\varphi$ is {\em deducible} in $\mathsf{H}\mathbf{EFL}$ from $\Gamma$ if there exist finite formulas $\psi_{1}$, $\ldots$, $\psi_{n} \in \Gamma$ such that $(\psi_{1}\land \cdots \land \psi_{n}) \to \varphi$ is provable in $\mathsf{H}\mathbf{EFL}$. Then it is easy to see that the deduction theorem holds in $\mathsf{H}\mathbf{EFL}$. }

\begin{proposition}
\label{prop:hil_unisub}
Uniform substitutions are length-preserving admissible in $\mathsf{H}\mathbf{EFL}$, i.e., if $\sigma$ is a uniform substitution and $\varphi$ has a derivation in $\mathsf{H}\mathbf{EFL}$ whose length is at most $n$, then $\varphi\sigma$ has a derivation in $\mathsf{H}\mathbf{EFL}$ whose length is at most $n$. 
\end{proposition}

\begin{proposition}
\label{prop:hilsys}
All the following are provable in $\mathsf{H}\mathbf{EFL}$. 
\begin{enumerate}
\item $@_{m}@_{n}\varphi \leftrightarrow @_{n}\varphi$. 
\item $n \to (@_{n} \varphi \leftrightarrow \varphi)$. 
\item $@_{n}m \to (@_{n} \varphi \leftrightarrow @_{m}\varphi)$. 
\item $@_{n}m \leftrightarrow @_{m}n$. 
\item $@_{n}(\varphi \to \psi) \leftrightarrow (@_{n}\varphi \to @_{n}\psi)$. 
\item $@_{n}m \to (\varphi[n/k] \leftrightarrow \varphi[m/k])$. 
\end{enumerate}
\end{proposition}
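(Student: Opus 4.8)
The plan is to prove all six items by working inside $\mathsf{H}\mathbf{EFL}$, using the two footnoted facts freely: the deduction theorem holds, and the replacement of provable equivalents is available since we have $(\texttt{K})$ and $(\texttt{Nec})$ for each of $\Box$, $\mathsf{F}$ and $@_n$. I would prove the items roughly in the stated order, since later items lean on earlier ones, and I would derive some small auxiliary lemmas along the way --- in particular the ``self-dual'' behaviour of $@_n$ (so that $@_n\psi \leftrightarrow \neg@_n\neg\psi$, hence $@_n$ distributes over all Booleans, not just $\to$), which is item (5) and is really the workhorse.

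For item (1), $@_m@_n\varphi \leftrightarrow @_n\varphi$: the direction $@_m@_n\varphi \to @_n\varphi$ is exactly $(\texttt{Agree})$ together with $(\texttt{US})$ to replace $p$ by $\varphi$ --- except $(\texttt{Agree})$ is stated with a propositional variable, so I first apply $(\texttt{US})$. For the converse $@_n\varphi \to @_m@_n\varphi$, I would use $(\texttt{Back})$, namely $@_n p \to \mathsf{F}@_n p$ (after $(\texttt{US})$), which gives $@_n\varphi \to \mathsf{F}@_n\varphi$; but that produces an $\mathsf{F}$, not an $@_m$. The cleaner route for the converse is to instantiate $(\texttt{Agree})$ as $@_m@_n\neg\varphi \to @_n\neg\varphi$, contrapose, and push $@_m$ and $@_n$ through the negations using $(\texttt{Selfdual})$-style reasoning; combined with the forward direction this yields the biconditional. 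Item (2), $n \to (@_n\varphi \leftrightarrow \varphi)$: one direction is $(\texttt{Elim})$ after $(\texttt{US})$; for the converse $n \to (\varphi \to @_n\varphi)$, apply $(\texttt{Elim})$ to $\neg\varphi$ in the form $@_n\neg\varphi \to (n \to \neg\varphi)$, contrapose to get $n \to (\varphi \to \neg@_n\neg\varphi)$, and rewrite $\neg@_n\neg\varphi$ as $@_n\varphi$ via $(\texttt{Selfdual})$. Item (3), $@_nm \to (@_n\varphi \leftrightarrow @_m\varphi)$, follows from items (1)--(2) by a standard hybrid-logic argument: from $@_nm$ and $(\texttt{Name})$-style reasoning, or more directly, use $(\texttt{Nec}_@)$ and $(\texttt{K}_@)$ to get $@_n@_n\varphi \leftrightarrow @_n@_m\varphi$ under the hypothesis $@_nm$ (replacement of equivalents, since $@_nm$ internally says $n$ and $m$ name the same agent), then collapse both sides by item (1). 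Item (4), $@_nm \leftrightarrow @_mn$, is the symmetry of the equality predicate; I would get it from item (3) applied to $\varphi := n$ (or $\varphi := m$) together with $(\texttt{Ref})$. Item (5) is the distribution of $@_n$ over $\to$ in both directions: $\to$ follows from $(\texttt{K}_@)$ after $(\texttt{US})$; $\leftarrow$ needs $(\texttt{Selfdual})$ to turn $(@_n\varphi \to @_n\psi)$ into the form $\neg@_n\varphi \vee @_n\psi = @_n\neg\varphi \vee @_n\psi$ and then recombine, again using that $@_n$ commutes with $\vee$ (derivable from $(\texttt{Selfdual})$ and $(\texttt{K}_@)$ applied to De Morgan).

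For item (6), $@_nm \to (\varphi[n/k] \leftrightarrow \varphi[m/k])$, I would argue by induction on the structure of $\varphi$, with $(\texttt{US})$, the replacement-of-equivalents meta-rule, and items (3)--(5) doing the inductive steps. The base cases: if $\varphi$ is $k$, then $\varphi[n/k] = n$, $\varphi[m/k] = m$, and $@_nm \to (n \leftrightarrow m)$ follows from item (2) (which gives $n \to (@_nm \leftrightarrow m)$, hence $n \to (@_nm \to m)$, and symmetrically, plus item (4)); if $\varphi$ is an agent nominal $\neq k$ or a propositional variable the substitution is vacuous and the claim is trivial; $\bot$ is trivial. The inductive step for $\to$ uses item (5) and the induction hypothesis under replacement; the steps for $\Box$, $\mathsf{F}$, $@_l$ use $(\texttt{Nec})$/$(\texttt{K})$ for the respective operator together with the induction hypothesis --- but here one must be careful when the outer operator is $@_l$ with $l = k$, since then $@_l$ itself is affected by the substitution; that case is handled by first rewriting $@_k\psi[n/k]$ under hypothesis $@_nm$ using item (3) to move between $@_n$ and $@_m$. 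The main obstacle I expect is exactly this bookkeeping in item (6): keeping straight which occurrences of $k$ are in ``subscript position'' (governed by $@$) versus ``argument position'', and ensuring the induction hypothesis is stated strongly enough (a biconditional, under the hypothesis $@_nm$, uniformly in all subformulas) so that the replacement-of-equivalents rule can be applied at each node. Everything else is routine hybrid-logic manipulation once $(\texttt{Selfdual})$ and item (5) are in hand.
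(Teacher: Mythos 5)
Your treatment of items 1--5 matches the paper's: the paper likewise reduces item 1 to the right-to-left direction via $(\texttt{Agree})$ and $(\texttt{Selfdual})$, reduces item 2 to $n \to (\varphi \to @_{n}\varphi)$ via the contrapositive of $(\texttt{Elim})$ and $(\texttt{Selfdual})$, and refers to the standard Blackburn--ten Cate arguments for 3--5; your base cases for item 6 are also fine. The gap is in the modal steps of the induction for item 6. From the induction hypothesis $\vdash @_{n}m \to (\psi[n/k] \leftrightarrow \psi[m/k])$, the rules $(\texttt{Nec}_{\Box})$ and $(\texttt{K}_{\Box})$ only yield $\vdash \Box @_{n}m \to (\Box(\psi[n/k]) \leftrightarrow \Box(\psi[m/k]))$: the hypothesis $@_{n}m$ gets boxed along with everything else. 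Your replacement-of-equivalents meta-rule does not rescue this, because it applies to outright provable equivalences, whereas $\psi[n/k] \leftrightarrow \psi[m/k]$ is provable only \emph{under the hypothesis} $@_{n}m$; one cannot replace inside a $\Box$ relative to a hypothesis unless that hypothesis is shown to persist under $\Box$. What closes the case is the interaction axiom $(\texttt{Rigid}_{=})$, i.e.\ $@_{n}m \to \Box @_{n}m$, which discharges $\Box @_{n}m$ back to $@_{n}m$ --- this is exactly the step the paper's proof singles out, and essentially the only place in the proposition where that axiom is needed. The analogous persistence facts are required for the other operators: $@_{n}m \to \mathsf{F}@_{n}m$ from $(\texttt{Back})$ under uniform substitution for the $\mathsf{F}$ case, and $@_{n}m \to @_{l}@_{n}m$ from item 1 for the $@_{l}$ case. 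Since you never invoke $(\texttt{Rigid}_{=})$ or $(\texttt{Back})$, your step ``use $(\texttt{Nec})$/$(\texttt{K})$ for the respective operator together with the induction hypothesis'' stalls at $\Box @_{n}m$; once this persistence lemma for the hypothesis $@_{n}m$ is added, the rest of your argument goes through.
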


\begin{proof}
For the provability of item 1, it suffices to show the right-to-left direction, which is shown by (\texttt{Agree}) and (\texttt{Selfdual}). For the provability of item 2, it suffices to show $n \to (\varphi \to @_{n} \varphi)$, whose provability is shown by the contraposition of $(\texttt{Elim})$ and $(\texttt{Selfdual})$. Then items 3 to 5 are proved similarly as given in~\cite[p.293, Lemma 2]{BC2006}. Finally, item 6 is proved by induction on $\varphi$. Here we show the case where $\varphi$ is of the form $l \in \mathsf{Nom}$, $\Box \psi$ and $@_{l}\psi$. First, we consider the case where $\varphi$ is of the form $l \in \mathsf{Nom}$. When $l \not\equiv k$, there is nothing to prove, so we focus on the case where $l \equiv k$. It suffices to show that $\vdash_{\mathsf{H}\mathbf{EFL}} @_{n}m \to (n \leftrightarrow m)$, but this is clear from items 2 and 4.  Second, we move to the case where $\varphi$ is of the form $\Box \psi$. By induction hypothesis, we obtain $\vdash_{\mathsf{H}\mathbf{EFL}} @_{n}m \to (\psi[n/k] \leftrightarrow \psi[m/k])$. By $(\texttt{K}_{\Box})$ and $(\texttt{Nec}_{\Box})$, we get $\vdash_{\mathsf{H}\mathbf{EFL}} \Box @_{n}m \to (\Box (\psi[n/k]) \leftrightarrow \Box (\psi[m/k]))$. It follows from the axiom $(\texttt{rigid}_{=})$ that $\vdash_{\mathsf{H}\mathbf{EFL}} @_{n}m \to((\Box \psi)[n/k] \leftrightarrow (\Box \psi)[m/k]))$, as desired.  
Third, we deal with the case where $\varphi$ is of the form $@_{l} \psi$. 
When $l \equiv k$, we show that $\vdash_{\mathsf{H}\mathbf{EFL}} @_{n}m \to (@_{l} (\psi[n/k]) \leftrightarrow @_{l} (\psi[m/k]))$. This is easily obtained by induction hypothesis, $(\texttt{Nec}_{@})$ and items 1 and 5. When $l \not\equiv k$, it suffices to prove that $\vdash_{\mathsf{H}\mathbf{EFL}} @_{n}m \to (@_{n} (\psi[n/k]) \leftrightarrow @_{m} (\psi[m/k]))$. By induction hypothesis, we have $\vdash_{\mathsf{H}\mathbf{EFL}} @_{n}m \to ((\psi[n/k]) \leftrightarrow (\psi[m/k]))$. By $(\texttt{Nec}_{@})$, we have 
\[
\vdash_{\mathsf{H}\mathbf{EFL}} @_{n}@_{n}m \to @_{n}((\psi[n/k]) \leftrightarrow (\psi[m/k])).
\] By items 2 and 5, 
\[
\vdash_{\mathsf{H}\mathbf{EFL}} @_{n}m \to (@_{n}(\psi[n/k]) \leftrightarrow @_{n}(\psi[m/k])).
\] 
By items 3 and 4, 
\[
\vdash_{\mathsf{H}\mathbf{EFL}} @_{n}m \to (@_{n}(\psi[m/k]) \leftrightarrow @_{m}(\psi[m/k])).
\]
This allows us to conclude $\vdash_{\mathsf{H}\mathbf{EFL}} @_{n}m \to (@_{n} (\psi[n/k]) \leftrightarrow @_{m} (\psi[m/k]))$. \qed
\end{proof}

The following translation is a key to specify our Hilbert system $\mathsf{H}\mathbf{EFL}$. 
\begin{definition}[Formulaic translation]
Given a set $\Theta$ of labelled formulas and a label $\alpha$, we define $\Theta_{\alpha}$ $:=$ $\inset{\varphi}{\alpha:\varphi \in \Theta}$. Let $\Gamma \stackrel{\mathcal{T}}{\Rightarrow} \Delta$ be a tree sequent. Then the {\em formulaic translation} of the sequent at $\alpha$ is defined inductively as:
\[
\left[\!\!\left[ \Gamma \stackrel{\mathcal{T}}{\Rightarrow} \Delta \right]\!\!\right]_{\alpha} := \bigwedge \Gamma_{\alpha} \to \bigvee\left(  \Delta_{\alpha}, @_{n_{1}}\Box \left[\!\!\left[ \Gamma \stackrel{\mathcal{T}}{\Rightarrow} \Delta \right]\!\!\right]_{\beta_{1}}, \ldots, @_{n_{k}}\Box \left[\!\!\left[ \Gamma \stackrel{\mathcal{T}}{\Rightarrow} \Delta \right]\!\!\right]_{\beta_{k}}
\right),
\]
where $\beta_{i}$ is an $n_{i}$-child of $\alpha$, $\beta_{i}$s enumerate all children of $\alpha$, $\bigwedge \emptyset$ $:=$ $\top$, and $\bigvee \emptyset$ $:=$ $\bot$.  \end{definition}

The formulaic translation of a tree sequent of Fig.~\ref{fig:tree_seq} of Section \ref{sec:tree_seq} at the root $0$ is 
\begin{center}
$@_{n}\varphi \to (@_{m}\psi \lor @_{n}\Box(\top \to @_{k}\theta) \lor @_{k}\Box(@_{m} \rho \to \bot))$. 
\end{center}

\begin{theorem}
\label{thm:comp_hil4efl}
If a tree sequent $\Gamma \stackrel{\mathcal{T}}{\Rightarrow} \Delta$ is provable in $\mathsf{T}\mathbf{EFL}$ then the formulaic translation $[\![ \Gamma \stackrel{\mathcal{T}}{\Rightarrow} \Delta ]\!]_{i}$ is provable in $\mathsf{H}\mathbf{EFL}$, where a natural number $i$ is the root of $\mathcal{T}$. 
\end{theorem}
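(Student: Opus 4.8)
The plan is to proceed by induction on the height of the derivation of $\Gamma \stackrel{\mathcal{T}}{\Rightarrow} \Delta$ in $\mathsf{T}\mathbf{EFL}$. For the base cases one checks that the formulaic translations of the initial sequents $(\bot)$ and $(\mathsf{id})$ at the root are provable in $\mathsf{H}\mathbf{EFL}$: for $(\mathsf{id})$ this amounts to showing that a conjunction containing $@_{n}\varphi$ as a conjunct implies a disjunction containing $@_{n}\varphi$ as a disjunct, which is a propositional tautology after we note that the translation at a label $\alpha$ different from the root is ``guarded'' by a necessity form built from the path $\alpha$ back to the root (via clauses (ii) and (iii) of the definition of necessity form). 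In fact the right framework is to prove, more generally, that for every label $\alpha \in \mathcal{T}$ the formula $L_{\alpha}([\![\Gamma \stackrel{\mathcal{T}}{\Rightarrow}\Delta]\!]_{\alpha})$ is provable in $\mathsf{H}\mathbf{EFL}$, where $L_{\alpha}(\#)$ is the necessity form encoding the unique path from the root $i$ down to $\alpha$; the case $\alpha = i$ gives the desired statement, and the general statement is what makes the induction go through for rules that act at a non-root label.

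For the inductive step I would go rule by rule. The propositional rules $(\to R)$, $(\to L)$ and the $@$-rules $(@R)$, $(@L)$, $(\mathsf{rep})$, $(\mathsf{ref}_{=})$, $(\mathsf{rigid}_{=})$ translate into purely ``local'' reasoning at the label $\alpha$ where the rule acts: one uses Proposition~\ref{prop:hilsys} (distribution of $@_{n}$ over $\to$, the identity axioms, item 6 for the replacement rules, (\texttt{Rigid}$_{=}$) and (\texttt{Rigid}$_{\neq}$) for $(\mathsf{rigid}_{=})$) together with the fact (first footnote) that the replacement of provable equivalents holds in $\mathsf{H}\mathbf{EFL}$, so one can massage the conjunction/disjunction of $\Gamma_\alpha, \Delta_\alpha$ inside the necessity form $L_\alpha$. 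The $(Cut)$ rule is handled by an application of (\texttt{MP}) once the cut formula $@_n\varphi$ is pushed through the necessity form using (\texttt{K}$_\Box$), (\texttt{K}$_@$), (\texttt{Nec}$_\Box$), (\texttt{Nec}$_@$) and (\texttt{Taut})/(\texttt{MP}) — i.e.\ one proves that $L(\chi \to \bot) \to (L(\chi) \to L(\bot))$-style distributions hold for necessity forms. The two genuinely modal rules are $(\Box R)^\dagger$ and $(\Box L)^\ddagger$: for $(\Box L)$, the premise adds $\beta:@_n\varphi$ at an $n$-child $\beta$ of $\alpha$, and one uses that $@_n \Box L(@_n\varphi \to \#)$ combined with $@_n\Box\#$-style reasoning yields $@_n\Box L(\#)$ after inserting $@_n\varphi$ into the disjunction guarded at $\beta$ — here the axiom (\texttt{DCom}$@\Box$) is needed to align $@_n\Box@_n$ with $@_n\Box$. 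The rule $(\Box R)^\dagger$ introduces a fresh child $\alpha\cdot_n i$ and a fresh natural-number label; on the Hilbert side the premise's translation has one more disjunct $@_n\Box[\![\cdots]\!]_{\alpha\cdot_n i}$ which, since the formula $@_n\Box\varphi$ we want to conclude appears in $\Delta_\alpha$, is exactly matched by the leaf translation $\top \to @_n\varphi$ (recall $\bigwedge\emptyset = \top$), so this step is essentially definitional once the guarding is set up correctly.

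The rule $(\mathsf{F}R)^\ast$, which introduces a fresh agent nominal $m$, is where the rule $L(\texttt{BG})$ of bounded generalization enters: the premise's translation at the root is of the form $L(@_n\langle\mathsf{F}\rangle m \to @_m\varphi)$ for the necessity form $L = L_\alpha$ leading to $\alpha$ (possibly composed with further implications for the other formulas of $\Gamma_\alpha,\Delta_\alpha$), and since $m$ is fresh, $L(\texttt{BG})$ lets us infer $L(@_n\mathsf{F}\varphi)$, which after reassembling is the conclusion's translation; dually $(\mathsf{F}L)$ is handled ``locally'' using (\texttt{K}$_{\mathsf{F}}$), (\texttt{Nec}$_{\mathsf{F}}$), the definition of $\langle\mathsf{F}\rangle$ as $\neg\mathsf{F}\neg$, and (\texttt{Selfdual}). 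The weakening rule $(w\mathsf{lab})$ just adds a leaf whose translation contributes $\top \to \bot$, i.e.\ a $\bot$ disjunct, so it is absorbed trivially. The main obstacle I anticipate is not any single rule but the bookkeeping for the necessity-form machinery: one must prove a handful of structural lemmas saying that necessity forms commute with the relevant Hilbert-style reasoning — namely that from $\vdash L(\varphi \to \psi)$ and $\vdash L(\varphi)$ one gets $\vdash L(\psi)$ (a ``$L$-distribution of (\texttt{MP})''), that $\vdash L(\varphi)$ whenever $\vdash \varphi$, and that replacement of equivalents holds inside $L$ — and then apply these uniformly. Getting the inductive statement right (the ``$L_\alpha$ guarding'' and its compatibility with the recursive definition of the formulaic translation, which itself builds $@_{n}\Box$-prefixes exactly matching clause (iii) of necessity forms) is the conceptual crux; once it is in place, every rule's verification is a short calculation.
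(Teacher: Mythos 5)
Your overall strategy --- induction on the height of the $\mathsf{T}\mathbf{EFL}$-derivation, the necessity-form machinery, $L(\texttt{BG})$ for $(\mathsf{F}R)$, (\texttt{DCom}$@\Box$) for the $\Box$-rules and (\texttt{Rigid}$_{=}$)/(\texttt{Rigid}$_{\neq}$) for $(\mathsf{rigid}_{=})$ --- is exactly the paper's. The one point where your plan as written would break is the strengthened induction hypothesis ``for every label $\alpha\in\mathcal{T}$, $L_{\alpha}([\![\Gamma\stackrel{\mathcal{T}}{\Rightarrow}\Delta]\!]_{\alpha})$ is provable, where $L_{\alpha}$ encodes the unique path from the root to $\alpha$.'' If $L_{\alpha}$ records only the bare path, this is false: take $\mathcal{T}=\{0,\alpha,\beta\}$ with $\alpha,\beta$ children of $0$ and the initial sequent $\beta:@_{n}p\stackrel{\mathcal{T}}{\Rightarrow}\beta:@_{n}p$; then $[\![\cdot]\!]_{\alpha}$ is $\top\to\bot$ and $L_{\alpha}([\![\cdot]\!]_{\alpha})$ is equivalent to $@_{k}\Box\bot$, which is not provable --- the information living off the chosen path cannot be discarded. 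If instead $L_{\alpha}$ is meant to absorb the side disjuncts and sibling subtrees as negated conjuncts into its antecedents (which is exactly how the paper manufactures the necessity form $L$ in its $(\mathsf{F}R)$ case, rewriting $A\to(B\lor C)$ as $(A\land\neg B)\to C$), then $L_{\alpha}([\![\cdot]\!]_{\alpha})$ is propositionally equivalent to $[\![\cdot]\!]_{i}$ for every $\alpha$, so the ``generalization'' is no strengthening at all and does no work by itself.

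The mechanism the paper actually uses is: keep the claim at the root only, and for a rule instance acting at a label $\alpha$ at depth $l$, prove by an inner induction on $0\le h\le l$ that $\vdash_{\mathsf{H}\mathbf{EFL}} [\![S']\!]_{\alpha_{l-h}}\to[\![S]\!]_{\alpha_{l-h}}$, where $S'$ and $S$ are the premise and conclusion and $(\alpha_{0},\ldots,\alpha_{l})$ is the path from the root to $\alpha$; the base case $h=0$ is the local computation you describe, and the inductive step lifts the implication through one $@_{n}\Box$-prefix and the surrounding $\land/\lor$ context via (\texttt{Nec}$_{\Box}$), (\texttt{Nec}$_{@}$), (\texttt{K}$_{\Box}$), (\texttt{K}$_{@}$). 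This is precisely your ``monotonicity of necessity forms,'' just without the misleading quantification over $\alpha$; with that repair your proof goes through. Two smaller corrections: $(\Box R)$ is not ``definitional'' --- the fresh leaf contributes the disjunct $@_{n}\Box(\top\to@_{n}\varphi)$, and identifying it with the conclusion's disjunct $@_{n}\Box\varphi$ is itself an application of (\texttt{DCom}$@\Box$); and for $(\mathsf{F}L)$ you will also need (\texttt{Back}) (with (\texttt{Selfdual}) and Proposition~\ref{prop:hilsys}) to obtain $\vdash @_{n}\mathsf{F}\varphi\land@_{n}\langle\mathsf{F}\rangle m\to@_{m}\varphi$, since $\langle\mathsf{F}\rangle@_{m}\varphi\to@_{m}\varphi$ is not derivable from (\texttt{K}$_{\mathsf{F}}$) and (\texttt{Nec}$_{\mathsf{F}}$) alone.
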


\begin{proof}
By induction on height $n$ of a derivation of $\Gamma \stackrel{\mathcal{T}}{\Rightarrow} \Delta$ in $\mathsf{T}\mathbf{EFL}$, where $i$ is the root of the tree $\mathcal{T}$. We skip the base case where $n$ = $0$. Let $n>0$. It is remarked that, when the sequent is obtained by $(\mathsf{rep}_{l})$, $(\mathsf{ref}_{=})$, $(@ L)$, or $(@ R)$, respectively, the translation of the sequent at the root is provable by Proposition \ref{prop:hilsys} (6), the axiom (\texttt{Ref}), (\texttt{Agree}), or Proposition \ref{prop:hilsys} (1), respectively. Here we focus on the cases where $\Gamma \stackrel{\mathcal{T}}{\Rightarrow} \Delta$ is obtained by $(\Box L)$, $(\mathsf{F}R)$ or $(\mathsf{rigid}_{=})$, since these are the cases where we need to be careful and the other cases are easy to establish. 
\begin{itemize}
\item[($\Box L$)] Suppose that $\alpha:@_{n}\Box \varphi, \Gamma' \stackrel{\mathcal{T}}{\Rightarrow} \Delta$ is obtained by ($\Box L$) from $\beta:@_{n}\varphi, \Gamma' \stackrel{\mathcal{T}}{\Rightarrow} \Delta$, where $\beta \in \mathcal{T}$ is an $n$-child of $\alpha$. By induction hypothesis, we obtain $\vdash_{\mathsf{H}\mathbf{EFL}}   \left[\!\!\left[ \beta:@_{n}\varphi, \Gamma' \stackrel{\mathcal{T}}{\Rightarrow} \Delta \right]\!\!\right]_{i}$. We show that $\vdash_{\mathsf{H}\mathbf{EFL}}   \left[\!\!\left[ \alpha:@_{n}\Box \varphi, \Gamma' \stackrel{\mathcal{T}}{\Rightarrow} \Delta \right]\!\!\right]_{i}$. 
Let $(\alpha_{0}, \alpha_{1}, \ldots, \alpha_{l})$ be the unique path from $\alpha$ ($\equiv$ $\alpha_{l}$) to the root $i$ ($\equiv$ $\alpha_{0}$) of tree $\mathcal{T}$. 
By induction on $0 \leqslant h \leqslant l$, we show that 
\[
\vdash_{\mathsf{H}\mathbf{EFL}} 
\left[\!\!\left[ \beta:@_{n}\varphi, \Gamma' \stackrel{\mathcal{T}}{\Rightarrow} \Delta \right]\!\!\right]_{\alpha_{l-h}}
\to 
 \left[\!\!\left[ \alpha:@_{n}\Box \varphi, \Gamma' \stackrel{\mathcal{T}}{\Rightarrow} \Delta \right]\!\!\right]_{\alpha_{l-h}}. 
\]

Let $h$ = $0$ and so $\alpha_{l-h}$ = $\alpha$. It suffices to show that a formula of the form
\[
\left( \gamma_{1} \to (\delta \lor @_{n}\Box ( (\gamma_{2} \land @_{n}\varphi) \to \psi_{2}) \right) \to 
\left( (@_{n}\Box \varphi  \land \gamma_{1} ) \to (\delta \lor @_{n}\Box ( \gamma_{2} \to \psi_{2}))  \right). 
\]
is provable in ${\mathsf{H}\mathbf{EFL}}$. This reduces to the provability of 
\[
@_{n}\Box \varphi  \land @_{n}\Box ( (\gamma_{2} \land @_{n}\varphi) \to \psi_{2})) \to @_{n}\Box ( \gamma_{2} \to \psi_{2}))
\]
in ${\mathsf{H}\mathbf{EFL}}$. This holds by the axiom $(\texttt{Dcom}\Box @)$ $@_{n}\Box @_{n} \varphi \leftrightarrow  @_{n}\Box \varphi$. 

Let $h>0$. But this case is shown with the help of $(\mathsf{Nec}_{\Box})$ and $(\mathsf{Nec}_{@})$. This completes our induction on $h$. So we conclude $\vdash_{\mathsf{H}\mathbf{EFL}}   \left[\!\!\left[ \alpha:@_{n}\Box \varphi, \Gamma' \stackrel{\mathcal{T}}{\Rightarrow} \Delta \right]\!\!\right]_{i}$. 

\item[($\mathsf{F}R$)] Suppose that $\Gamma \stackrel{\mathcal{T}}{\Rightarrow} \Delta', \alpha:@_{n}\mathsf{F}\varphi$ is obtained by ($\mathsf{F}R$) from  $\alpha: @_{n}\langle \mathsf{F}\rangle m, \Gamma \stackrel{\mathcal{T}}{\Rightarrow} \Delta', \alpha:@_{m} \varphi$ where $m$ is fresh in the conclusion. By induction hypothesis, we have $\vdash_{\mathsf{H}\mathbf{EFL}} \left[\!\!\left[ \alpha: @_{n}\langle \mathsf{F}\rangle m, \Gamma \stackrel{\mathcal{T}}{\Rightarrow} \Delta', \alpha:@_{m} \varphi \right]\!\!\right]_{i}$, which is equivalent to $\vdash_{\mathsf{H}\mathbf{EFL}} L(@_{n}\langle \mathsf{F}\rangle m \to @_{m} \varphi)$ for some necessitation form $L$. Fix such necessitation form $L$. By the inference rule $L(\texttt{BG})$ of $\mathsf{H}\mathbf{EFL}$, we can obtain $\vdash_{\mathsf{H}\mathbf{EFL}} L(@_{n} \mathsf{F} \varphi)$, which is equivalent to 
$\vdash_{\mathsf{H}\mathbf{EFL}}  \left[\!\!\left[  \Gamma \stackrel{\mathcal{T}}{\Rightarrow} \Delta', \alpha:@_{n}\mathsf{F}\varphi  \right]\!\!\right]_{i}$. 
\item[$(\mathsf{rigid}_{=})$] 
Let us suppose that $\alpha: @_{n}m, \Gamma \stackrel{\mathcal{T}}{\Rightarrow} \Delta$ is obtained by $(\mathsf{rigid}_{=})$ from $\beta: @_{n}m, \Gamma  \stackrel{\mathcal{T}}{\Rightarrow} \Delta$. By induction hypothesis, we obtain $\vdash_{\mathsf{H}\mathbf{EFL}}  \left[\!\!\left[  \beta: @_{n}m, \Gamma  \stackrel{\mathcal{T}}{\Rightarrow} \Delta \right]\!\!\right]_{i}$. Our goal is to show that 
$\vdash_{\mathsf{H}\mathbf{EFL}}  \left[\!\!\left[  \alpha: @_{n}m, \Gamma  \stackrel{\mathcal{T}}{\Rightarrow} \Delta \right]\!\!\right]_{i}$. 
It suffices to show the following two cases: (i) $\beta$ is a $k$-child of $\alpha$ or (ii) $\alpha$ is a $k$-child of $\beta$.  We note that we will use the axioms $(\texttt{Rigid}_{=})$ in (i) and $(\texttt{Rigid}_{\neq})$ in (ii). First, we deal with the case (i). Let $(\alpha_{0}, \alpha_{1}, \ldots, \alpha_{l})$ be the unique path from $\alpha$ ($\equiv$ $\alpha_{l}$) to the root $i$ ($\equiv$ $\alpha_{0}$) of tree $\mathcal{T}$. Recall that we assume that $\beta$ is a $k$-child of $\alpha$. By induction on $0 \leqslant h \leqslant l$, we show that $\vdash_{\mathsf{H}\mathbf{EFL}}  \left[\!\!\left[  \beta: @_{n}m, \Gamma  \stackrel{\mathcal{T}}{\Rightarrow} \Delta \right]\!\!\right]_{\alpha_{l-h}} \to \left[\!\!\left[  \alpha: @_{n}m, \Gamma  \stackrel{\mathcal{T}}{\Rightarrow} \Delta \right]\!\!\right]_{\alpha_{l-h}}$. 
Let $h$ = $0$ and so $\alpha_{l-h}$ = $\alpha$. It suffices to show that a formula of the form: 
\[
(\gamma_{\alpha} \to (\delta_{\alpha} \lor @_{k}\Box ((@_{n}m \land \gamma_{\beta}) \to \delta_{\beta} )) ) 
\to ((\gamma_{\alpha}\land @_{n}m )\to (\delta_{\alpha} \lor @_{k}\Box (\gamma_{\beta} \to \delta_{\beta} )) ) 
\]
is provable in ${\mathsf{H}\mathbf{EFL}}$. For this, it suffices to show $\vdash_{\mathsf{H}\mathbf{EFL}} @_{n}m \to @_{k}\Box @_{n}m$, which holds by $(\texttt{Rigid}_{=})$, the distribution of $@$ over the implication and Proposition \ref{prop:hilsys} (1). Let $h>0$. But this case is shown with the help of $(\mathsf{Nec}_{\Box})$ and $(\mathsf{Nec}_{@})$. This completes our induction on $h$. So we conclude $\vdash_{\mathsf{H}\mathbf{EFL}}   \left[\!\!\left[ \alpha:@_{n}m, \Gamma \stackrel{\mathcal{T}}{\Rightarrow} \Delta \right]\!\!\right]_{i}$. 
Second, we move to the case (ii). Let $(\beta_{0}, \beta_{1}, \ldots, \beta_{l})$ be the unique path from $\beta$ ($\equiv$ $\beta_{l}$) to the root $i$ ($\equiv$ $\beta_{0}$) of tree $\mathcal{T}$. Note that we assume that $\alpha$ is a $k$-child of $\beta$. By induction on $0 \leqslant h \leqslant l$, we show that $\vdash_{\mathsf{H}\mathbf{EFL}}  \left[\!\!\left[  \beta: @_{n}m, \Gamma  \stackrel{\mathcal{T}}{\Rightarrow} \Delta \right]\!\!\right]_{\beta_{l-h}} \to \left[\!\!\left[  \alpha: @_{n}m, \Gamma  \stackrel{\mathcal{T}}{\Rightarrow} \Delta \right]\!\!\right]_{\beta_{l-h}}$. Let $h$ = $0$ and so $\beta_{l-h}$ = $\beta$. It suffices to show that a formula of the form: 
\[
((\gamma_{\beta}\land @_{n}m )\to (\delta_{\beta} \lor @_{k}\Box (\gamma_{\alpha} \to \delta_{\alpha} )) ) \to
(\gamma_{\beta} \to (\delta_{\beta} \lor @_{k}\Box ((@_{n}m \land \gamma_{\alpha}) \to \delta_{\alpha} )) ) 
\]
is provable in ${\mathsf{H}\mathbf{EFL}}$. For this, it suffices to show $\vdash_{\mathsf{H}\mathbf{EFL}} \neg @_{n}m \to @_{k}\Box \neg @_{n}m$, which holds by $(\texttt{Rigid}_{\neq})$, $(\texttt{Selfdual})$ and Proposition \ref{prop:hilsys} (1). Let $h>0$. But this case is shown with the help of $(\mathsf{Nec}_{\Box})$ and $(\mathsf{Nec}_{@})$. This completes our induction on $h$. So we conclude $\vdash_{\mathsf{H}\mathbf{EFL}}   \left[\!\!\left[ \alpha:@_{n}m, \Gamma \stackrel{\mathcal{T}}{\Rightarrow} \Delta \right]\!\!\right]_{i}$. 
\qed
\end{itemize}
\end{proof}

In what follows in this section, we prove the soundness of $\mathsf{H}\mathbf{EFL}$ for the tree sequent calculus $\mathsf{T}\mathbf{EFL}$ with the cut rule. The cut rule is necessary to prove the following. 

\begin{lemma}
\label{lem:inv}
The rules $(\to R)$, $(\Box R)$, $(@ R)$, and $(@ L)$ are invertible, i.e., if the lower sequent is provable in $\mathsf{T}\mathbf{EFL}$ then the upper sequent is also provable in $\mathsf{T}\mathbf{EFL}$. 
\end{lemma}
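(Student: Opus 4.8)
The plan is to read invertibility off the cut rule, in the spirit of the remark preceding the statement. For each of the four rules I would first build a short \emph{cut-free} derivation of an auxiliary ``reversing'' tree sequent, and then apply $(Cut)$, using the supposed proof of the lower sequent as one premise and the auxiliary sequent as the other. Since the initial sequent $(\mathsf{id})$ already carries arbitrary side contexts $\Gamma,\Delta$, no separate appeal to the admissible weakening rules of Proposition \ref{prop:admissible} is needed, except to match up the underlying trees in the $(\Box R)$ case.

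Concretely, for $(@R)$ and $(@L)$ I would observe that one step of $(@L)$ from an instance of $(\mathsf{id})$ gives a cut-free proof of $\alpha:@_{n}@_{m}\varphi \stackrel{\mathcal{T}}{\Rightarrow} \alpha:@_{m}\varphi$, and one step of $(@R)$ from $(\mathsf{id})$ gives a cut-free proof of $\alpha:@_{m}\varphi \stackrel{\mathcal{T}}{\Rightarrow} \alpha:@_{n}@_{m}\varphi$; cutting the first against a proof of $\Gamma \stackrel{\mathcal{T}}{\Rightarrow} \Delta, \alpha:@_{n}@_{m}\varphi$ on the cut formula $\alpha:@_{n}@_{m}\varphi$ yields the upper sequent of $(@R)$, and cutting the second against a proof of $\alpha:@_{n}@_{m}\varphi, \Gamma \stackrel{\mathcal{T}}{\Rightarrow} \Delta$ yields the upper sequent of $(@L)$. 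For $(\to R)$ I would derive $\alpha:@_{n}(\varphi\to\psi), \alpha:@_{n}\varphi \stackrel{\mathcal{T}}{\Rightarrow} \alpha:@_{n}\psi$ by a single application of $(\to L)$ whose two premises, $\alpha:@_{n}\varphi \stackrel{\mathcal{T}}{\Rightarrow} \alpha:@_{n}\psi, \alpha:@_{n}\varphi$ and $\alpha:@_{n}\psi, \alpha:@_{n}\varphi \stackrel{\mathcal{T}}{\Rightarrow} \alpha:@_{n}\psi$, are instances of $(\mathsf{id})$; cutting this on $\alpha:@_{n}(\varphi\to\psi)$ against the given proof of $\Gamma \stackrel{\mathcal{T}}{\Rightarrow} \Delta, \alpha:@_{n}(\varphi\to\psi)$ produces the upper sequent.

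The $(\Box R)$ case is the one that needs care, because the rule changes the tree. Given a proof of $\Gamma \stackrel{\mathcal{T}}{\Rightarrow} \Delta, \alpha:@_{n}\Box\varphi$ with $i$ fresh in it, I would first apply $(w\mathsf{lab})$ to lift this proof to the tree $\mathcal{T}' := \mathcal{T}\cup\{\alpha\cdot_{n}i\}$ (the side condition is met: $\alpha\in\mathcal{T}$, so $\mathcal{T}'$ is again a tree), obtaining a proof of $\Gamma \stackrel{\mathcal{T}'}{\Rightarrow} \Delta, \alpha:@_{n}\Box\varphi$. Next, applying $(\Box L)$ to the $(\mathsf{id})$-instance $\alpha\cdot_{n}i:@_{n}\varphi \stackrel{\mathcal{T}'}{\Rightarrow} \alpha\cdot_{n}i:@_{n}\varphi$, with $\alpha\cdot_{n}i$ playing the role of the required $n$-child of $\alpha$, gives $\alpha:@_{n}\Box\varphi \stackrel{\mathcal{T}'}{\Rightarrow} \alpha\cdot_{n}i:@_{n}\varphi$. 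Cutting these two on the cut formula $\alpha:@_{n}\Box\varphi$ yields exactly $\Gamma \stackrel{\mathcal{T}'}{\Rightarrow} \Delta, \alpha\cdot_{n}i:@_{n}\varphi$, the upper sequent of $(\Box R)$.

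The step I expect to be the only real obstacle is precisely this tree bookkeeping for $(\Box R)$: one must enlarge the tree of the lower sequent via $(w\mathsf{lab})$ so that it agrees with the tree of the auxiliary sequent before $(Cut)$ can be applied, and this depends on $\alpha$ already lying in $\mathcal{T}$ and on the freshness of $i$; the three Boolean/$@$ cases are essentially immediate. I would also note in passing that the lemma can alternatively be obtained without cut by a purely semantic argument using Theorems \ref{thm:sound_tree_seq4efl} and \ref{thm:comp_tree_seq4efl}: provability of the lower sequent gives its truth in every $(\mathfrak{M},f)$, an unfolding of the satisfaction clauses shows the upper sequent is then true in every $(\mathfrak{M},g)$ --- for $(\Box R)$ one restricts $g$ to $\mathcal{T}$ and uses the $n$-child condition $g(\alpha)R_{\underline{n}}g(\alpha\cdot_{n}i)$ --- and completeness then even delivers provability in $\mathsf{T}\mathbf{EFL}^{-}$; but here we follow the convention of arguing through the cut rule.
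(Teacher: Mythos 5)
Your proof is correct and takes essentially the same route as the paper: for each rule you cut the given proof of the lower sequent against a short cut-free derivation of the appropriate ``reversing'' sequent built from $(\mathsf{id})$ via the dual rule, using $(w\mathsf{lab})$ in the $(\Box R)$ case to enlarge the tree to $\mathcal{T}\cup\{\alpha\cdot_{n}i\}$ before the cut. The paper writes out only the $(\to R)$ and $(\Box R)$ cases, and its constructions there coincide with yours.
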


\begin{proof}
We only prove the invertibility of $(\to R)$ and $(\Box R)$. First we deal with $(\to R)$. 
Suppose that $\Gamma \stackrel{\mathcal{T}}{\Rightarrow} \Delta, \alpha:@_{n}(\varphi \to \psi)$ is provable in $\mathsf{T}\mathbf{EFL}$. This is shown as follows:
\[
\infer[(Cut)]{\alpha:@_{n}\varphi, \Gamma \stackrel{\mathcal{T}}{\Rightarrow} \Delta, \alpha:@_{n}\psi }{
\Gamma \stackrel{\mathcal{T}}{\Rightarrow} \Delta, \alpha:@_{n}(\varphi \to \psi)
&
\alpha:@_{n}(\varphi \to \psi),  \alpha:@_{n}\varphi \stackrel{\mathcal{T}}{\Rightarrow} \alpha:@_{n}\psi
},
\]
where the rightmost tree sequent is provable in $\mathsf{T}\mathbf{EFL}$ by $(\to L)$. 
Second we move to $(\Box R)$. Suppose that $\Gamma \stackrel{\mathcal{T}}{\Rightarrow} \Delta, \alpha : @_{n}\Box \varphi$ is provable in $\mathsf{T}\mathbf{EFL}$. Then the provability of the upper sequent of $(\Box R)$ is established as follows: 
\[
\infer[(Cut)]{
\Gamma \stackrel{\mathcal{T}\cup \setof{\alpha \cdot_{n} i}}{\Rightarrow} \Delta, \alpha\cdot_{n} i: @_{n}\varphi}
{
\infer[(w\mathsf{lab})]{\Gamma \stackrel{\mathcal{T}\cup \setof{\alpha \cdot_{n} i}}{\Rightarrow} \Delta, \alpha: @_{n}\Box \varphi}{\Gamma \stackrel{\mathcal{T}}{\Rightarrow} \Delta, \alpha: @_{n}\Box \varphi}
&
\infer[(L\Box)]{\alpha: @_{n}\Box \varphi, \Gamma \stackrel{\mathcal{T}\cup \setof{\alpha \cdot_{n} i}}{\Rightarrow} \Delta, \alpha\cdot_{n} i: @_{n}\varphi}{
\infer[(\mathsf{id})]{
\alpha\cdot_{n} i: @_{n} \varphi, \Gamma \stackrel{\mathcal{T}\cup \setof{\alpha \cdot_{n} i}}{\Rightarrow} \Delta, \alpha\cdot_{n} i: @_{n}\varphi}{}
}
}.
\]
\qed
\end{proof}

\begin{theorem}
\label{thm:sound_hil4efl}
If $\varphi$ is provable in $\mathsf{H}\mathbf{EFL}$, then $\stackrel{\mathcal{T}}{\Rightarrow} \alpha: @_{n}\varphi$ is provable in $\mathsf{T}\mathbf{EFL}$ for all trees $\mathcal{T}$, $\alpha \in \mathcal{T}$ and nominals $n$. 
\end{theorem}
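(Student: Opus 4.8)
The plan is to proceed by induction on the length of a derivation of $\varphi$ in $\mathsf{H}\mathbf{EFL}$, showing that for every theorem $\varphi$ and every tree $\mathcal{T}$, label $\alpha \in \mathcal{T}$ and nominal $n$ fresh in $\varphi$, the sequent $\stackrel{\mathcal{T}}{\Rightarrow} \alpha:@_{n}\varphi$ is provable in $\mathsf{T}\mathbf{EFL}$. The freshness of $n$ is what allows us to treat $n$ as a ``generic'' agent; note that $@_{n}\varphi$ is always $@$-prefixed, so $\alpha:@_{n}\varphi$ is a legitimate labelled formula. For the base cases, I would verify directly that each axiom scheme of Table~\ref{table:hilbert}, prefixed by $\alpha:@_{n}$, is provable in $\mathsf{T}\mathbf{EFL}$: the propositional tautologies and the $\texttt{K}$-axioms reduce, via $(@R)$, $(@L)$, $(\to R)$, $(\to L)$ and $(\mathsf{id})$, to routine sequent manipulations; $(\texttt{Ref})$ uses $(\mathsf{ref}_{=})$ read bottom-up together with $(\mathsf{id})$; $(\texttt{Elim})$, $(\texttt{Agree})$, $(\texttt{Selfdual})$, $(\texttt{Back})$ unfold using $(@R)$/$(@L)$ and the $\mathsf{F}$-rules; $(\texttt{Rigid}_{=})$ and $(\texttt{Rigid}_{\neq})$ are handled by $(\Box R)$ (introducing a fresh $n$-child) followed by $(\mathsf{rigid}_{=})$; and $(\texttt{DCom}@\Box)$ is handled similarly by $(\Box R)$, $(\Box L)$ together with $(@R)$/$(@L)$. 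Here Proposition~\ref{prop:hilsys} and the height-preserving admissibility of weakening (Proposition~\ref{prop:admissible}) will be used freely to clean up the derivations.

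For the inductive step I would go rule by rule. The rules $(\texttt{MP})$, $(\texttt{Nec}_{\Box})$, $(\texttt{Nec}_{\mathsf{F}})$, $(\texttt{Nec}_{@})$ and $(\texttt{US})$ are the comparatively easy cases. For $(\texttt{MP})$: from the induction hypotheses giving provability of $\stackrel{\mathcal{T}}{\Rightarrow}\alpha:@_{n}\varphi$ and $\stackrel{\mathcal{T}}{\Rightarrow}\alpha:@_{n}(\varphi\to\psi)$, apply the invertibility of $(\to R)$ (Lemma~\ref{lem:inv}) to the second to get $\alpha:@_{n}\varphi\stackrel{\mathcal{T}}{\Rightarrow}\alpha:@_{n}\psi$, then cut on $\alpha:@_{n}\varphi$; this is exactly where the cut rule of $\mathsf{T}\mathbf{EFL}$ is needed. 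For $(\texttt{Nec}_{\Box})$: from $\stackrel{\mathcal{T}\cup\{\alpha\cdot_{n}i\}}{\Rightarrow}\alpha\cdot_{n}i:@_{n}\varphi$ (provable by the induction hypothesis applied to the larger tree, with $i$ fresh), apply $(\Box R)$ to conclude $\stackrel{\mathcal{T}}{\Rightarrow}\alpha:@_{n}\Box\varphi$. For $(\texttt{Nec}_{@})$: from $\stackrel{\mathcal{T}}{\Rightarrow}\alpha:@_{m}\varphi$ apply $(@R)$ to get $\stackrel{\mathcal{T}}{\Rightarrow}\alpha:@_{n}@_{m}\varphi$; the freshness side-condition on $n$ must be tracked carefully but is available since $n$ does not occur in $@_{m}\varphi$. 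For $(\texttt{Nec}_{\mathsf{F}})$: similar, reducing $@_{n}\mathsf{F}\varphi$ via $(\mathsf{F}R)$ to $\alpha:@_{n}\langle\mathsf{F}\rangle m\stackrel{\mathcal{T}}{\Rightarrow}\alpha:@_{m}\varphi$ and then weakening the provable sequent $\stackrel{\mathcal{T}}{\Rightarrow}\alpha:@_{m}\varphi$ on the left. For $(\texttt{US})$: invoke the admissibility of $(\mathsf{sub})$ from Proposition~\ref{prop:admissible}.

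The genuinely delicate cases are the two side-conditioned rules $(\texttt{Name})$ and $(L(\texttt{BG}))$, and I expect $(L(\texttt{BG}))$ to be the main obstacle. For $(\texttt{Name})$: suppose $n\to\varphi$ is a theorem with $n$ fresh in $\varphi$, so by the induction hypothesis $\stackrel{\mathcal{T}}{\Rightarrow}\alpha:@_{k}(n\to\varphi)$ is provable for any $k$ fresh in $n\to\varphi$; applying $(\to R)$ invertibly gives $\alpha:@_{k}n\stackrel{\mathcal{T}}{\Rightarrow}\alpha:@_{k}\varphi$, and using $(\mathsf{ref}_{=})$ (read bottom-up at label $\alpha$ with nominal $k$) together with the replacement machinery of $(\mathsf{rep}_{=i})$ — substituting $k$ for $n$, which is legitimate since $n$ is fresh — one derives $\stackrel{\mathcal{T}}{\Rightarrow}\alpha:@_{k}\varphi$, i.e.\ $\stackrel{\mathcal{T}}{\Rightarrow}\alpha:@_{n}\varphi$ after renaming. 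For $(L(\texttt{BG}))$: the induction hypothesis yields provability of $\stackrel{\mathcal{T}}{\Rightarrow}\alpha:@_{n'}L(@_{n}\langle\mathsf{F}\rangle m\to@_{m}\varphi)$ for fresh $n'$, with $m$ fresh in $L(@_{n}\mathsf{F}\varphi)$. The key is that a necessity form $L(\#)$ translates, under the $@_{n'}$-prefix and after repeated use of the invertible rules $(\to R)$ and $(\Box R)$ from Lemma~\ref{lem:inv}, into a fragment of a tree sequent sitting at some label $\beta$ reached by following the ``path'' encoded in $L$ — this is precisely the correspondence between necessity forms and paths in a tree noted after the definition of necessity form. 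Having descended to that label $\beta$, the residual sequent is (essentially) $\beta:@_{n}\langle\mathsf{F}\rangle m,\Pi\stackrel{\mathcal{T}'}{\Rightarrow}\Sigma,\beta:@_{m}\varphi$ with $m$ fresh, which is exactly the premise of an application of $(\mathsf{F}R)$ yielding $\Pi\stackrel{\mathcal{T}'}{\Rightarrow}\Sigma,\beta:@_{n}\mathsf{F}\varphi$; re-packaging this back up the path via $(\Box R)$ and $(\to R)$ (or rather, by re-assembling the necessity form) gives provability of $\stackrel{\mathcal{T}}{\Rightarrow}\alpha:@_{n'}L(@_{n}\mathsf{F}\varphi)$, i.e.\ the conclusion. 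The bookkeeping that makes ``descending along $L$'' precise — keeping track of how the fresh child-labels introduced by $(\Box R)$ line up with the $@_{n_i}\Box(\cdots)$ blocks of a necessity form, and checking that the freshness condition on $m$ survives — is the part that will require the most care.
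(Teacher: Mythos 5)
Your proposal follows essentially the same route as the paper's proof: induction on the Hilbert derivation, explicit tree-sequent derivations for the axioms, the invertibility lemma plus cut for $(\texttt{MP})$, weakening followed by $(\mathsf{F}R)$ for $(\texttt{Nec}_{\mathsf{F}})$, and the descend-along-$L$-via-invertibility-then-reassemble strategy for $(L(\texttt{BG}))$, which is exactly how the paper handles that case. One small repair: in the $(\texttt{Name})$ case the rules $(\mathsf{rep}_{=i})$ cannot turn the antecedent $\alpha:@_{k}n$ into $\alpha:@_{k}k$ (they need an equality formula already present on the left), so you should instead apply the admissible substitution rule $(\mathsf{sub})$ of Proposition~\ref{prop:admissible} with $n\mapsto k$ (legitimate since $n$ is fresh in $\varphi$) and then discharge $\alpha:@_{k}k$ by $(\mathsf{ref}_{=})$.
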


\begin{proof}
Suppose that there is a derivation $(\varphi_{0}, \ldots, \varphi_{h})$ of $\varphi$ in $\mathsf{H}\mathbf{EFL}$. By induction on $0 \leqslant j \leqslant h$, we show that $\stackrel{\mathcal{T}}{\Rightarrow} \alpha: @_{n}\varphi_{j}$ is provable in $\mathsf{T}\mathbf{EFL}$ for all nominals $n$. We demonstrate some cases. Let us start with (\texttt{Rigid}$_{=}$), which is shown as follows.
\[
\infer[(\to R)]{\stackrel{\mathcal{T}}{\Rightarrow} \alpha:@_{k}(@_{n}m \to \Box @_{n}m) }{
\infer[(@ L)]{\alpha:@_{k} @_{n}m \stackrel{\mathcal{T}}{\Rightarrow} \alpha:@_{k} \Box @_{n}m}{
\infer[(\Box R)]{
\alpha:@_{n}m \stackrel{\mathcal{T}}{\Rightarrow} \alpha:@_{k} \Box @_{n}m
}{
\infer[( @R)]{
\alpha:@_{n}m \stackrel{\mathcal{T}\cup \setof{\alpha \cdot_{k} i}}{\Rightarrow} \alpha \cdot_{k} i:@_{k} @_{n}m
}{
\infer[(\mathsf{\mathsf{rigid}_{=}})]{\alpha:@_{n}m \stackrel{\mathcal{T}\cup \setof{\alpha \cdot_{k} i}}{\Rightarrow} \alpha \cdot_{k} i: @_{n}m
}{
\infer[(\mathsf{id})]{
\alpha \cdot_{k} i:@_{n}m \stackrel{\mathcal{T}\cup \setof{\alpha \cdot_{k} i}}{\Rightarrow} \alpha \cdot_{k} i: @_{n}m}{}
}
}
}
}
}
\]
For (\texttt{Rigid}$_{\neq}$), the following derivation is enough for our goal: 
\[
\infer[(\to R)]{\stackrel{\mathcal{T}}{\Rightarrow} \alpha:@_{k}(\neg @_{n}m \to \Box \neg @_{n}m)}{
\infer[(\Box R)]{\alpha:@_{k}\neg @_{n}m \stackrel{\mathcal{T}}{\Rightarrow} \alpha:@_{k} \Box \neg @_{n}m }{
\infer[(\neg R)]{\alpha:@_{k}\neg @_{n}m \stackrel{ \mathcal{T}\cup \setof{\alpha \cdot_{k} i} }{\Rightarrow} \alpha \cdot_{k} i: @_{k} \neg @_{n}m}{
\infer[(\neg L)]{\alpha \cdot_{k} i: @_{k} @_{n}m, \alpha:@_{k}\neg @_{n}m \stackrel{ \mathcal{T}\cup \setof{\alpha \cdot_{k} i} }{\Rightarrow} }{
\infer[(@R)]{\alpha \cdot_{k} i: @_{k} @_{n}m \stackrel{\mathcal{T} \cup \setof{\alpha \cdot_{k} i} }{\Rightarrow} \alpha:@_{k}@_{n}m }{
\infer[(@L)]{\alpha \cdot_{k} i: @_{k} @_{n}m \stackrel{\mathcal{T} \cup \setof{\alpha \cdot_{k} i} }{\Rightarrow} \alpha:@_{n}m}{
\infer[(\mathsf{rigid}_{=})]{\alpha \cdot_{k} i: @_{n}m \stackrel{ \mathcal{T} \cup \setof{\alpha \cdot_{k} i} }{\Rightarrow} \alpha:@_{n}m}{
\infer[(\mathsf{id})]{
\alpha: @_{n}m \stackrel{\mathcal{T} \cup \setof{\alpha \cdot_{k} i} }{\Rightarrow} \alpha:@_{n}m}
{}
}
}
}
}
}
}
}.
\]
Now we move to (\texttt{DCom}$@\Box$). We show the right-to-left direction alone, since the converse direction is shown similarly. Let us see the derivation below, from which we can obtain the provability of $\stackrel{\mathcal{T}}{\Rightarrow} \alpha:@_{m}(@_{n} \Box @_{n} p \to  @_{n} \Box p)$ in $\mathsf{T}\mathbf{EFL}$:
\[
\infer[(\Box R)]{
\alpha:@_{n} \Box @_{n} p  \stackrel{\mathcal{T}}{\Rightarrow} \alpha:@_{n} \Box p
}{
\infer[(\Box L)]{
\alpha:@_{n} \Box @_{n} p  \stackrel{\mathcal{T}\cup\setof{\alpha \cdot_{n} i}}{\Rightarrow} \alpha:\cdot_{n} i: @_{n} p
}{
\infer[(@L)]{
\alpha \cdot_{n} i :@_{n} @_{n} p  \stackrel{\mathcal{T}\cup\setof{\alpha \cdot_{n} i}}{\Rightarrow} \alpha \cdot_{n} i: @_{n} p
}{
\infer[(\mathsf{id})]{
\alpha \cdot_{n} i :@_{n} p  \stackrel{\mathcal{T}\cup\setof{\alpha \cdot_{n} i}}{\Rightarrow} \alpha \cdot_{n} i: @_{n} p}{}
}
}
}
\]
Now we deal with some inference rules below.
\begin{itemize}
\item[($\texttt{Name}$)] 
Let $\varphi_{j}$ $\equiv$ $n \to \psi$ be obtained by $(\texttt{Name})$. Fix any finite tree $\mathcal{T}$, $\alpha \in \mathcal{T}$ and nominal $k$. Let $m \not\equiv k$ be a fresh nominal in $\mathcal{T}$ and $\psi$. Note that $m$ is also fresh in $\alpha \in \mathcal{T}$. By Proposition \ref{prop:hil_unisub}, $m \to \psi$ has a derivation whose length is {\em at most} $j$. By induction hypothesis, $\stackrel{\mathcal{T}}{\Rightarrow} \alpha: @_{k}(m \to \psi)$ is provable in $\mathsf{T}\mathbf{EFL}$. By admissibility of uniform substitution $[k/m]$ in $\mathsf{T}\mathbf{EFL}$ (by Proposition \ref{prop:admissible}), $\stackrel{\mathcal{T}}{\Rightarrow} \alpha: @_{k}(k \to \psi)$ is provable in $\mathsf{T}\mathbf{EFL}$. By Lemma \ref{lem:inv}, we obtain the provability of $\alpha: @_{k}k \stackrel{\mathcal{T}}{\Rightarrow} \alpha: @_{k}\psi$ in $\mathsf{T}\mathbf{EFL}$. By $(\mathsf{ref}_{=})$, we conclude that $\stackrel{\mathcal{T}}{\Rightarrow} \alpha: @_{k}\psi$ is provable in $\mathsf{T}\mathbf{EFL}$.

\item[($L$(\texttt{BG}))] 
Let $\varphi_{j}$ $\equiv$ $\Box \psi$ be obtained by $(L(\texttt{BG}))$. Fix any finite tree $\mathcal{T}$, $\alpha \in \mathcal{T}$ and nominal $k$. 
By induction hypothesis, $\stackrel{\mathcal{T}}{\Rightarrow} \alpha: @_{k} L(@_{n} \langle \mathsf{F} \rangle m \to @_{m}\varphi)$ is provable in $\mathsf{T}\mathbf{EFL}$, where we can assume that $m$ satisfies the freshness condition by Proposition \ref{prop:hil_unisub}. By applying Lemma \ref{lem:inv} (i.e., the invertibility of the right rules) repeatedly to the consequent of a resulting tree sequent,  we obtain the provability of a tree sequent of the form 
$\Gamma, \beta: @_{n} \langle \mathsf{F} \rangle m \stackrel{\mathcal{T}'}{\Rightarrow}  \Delta, \beta: @_{m}\varphi$. Then we apply the right rules in a converse direction of our repeated application of Lemma \ref{lem:inv} to conclude that $\stackrel{\mathcal{T}}{\Rightarrow} \alpha: @_{k} L(@_{n}  \mathsf{F} \varphi)$ is provable in $\mathsf{T}\mathbf{EFL}$. To illustrate this argument, let $L$ $\equiv$ $@_{n}\Box (\psi \to \#)$. 
By induction hypothesis, $\stackrel{\mathcal{T}}{\Rightarrow} \alpha: @_{k} @_{n}\Box (\psi \to (@_{n} \langle \mathsf{F} \rangle m \to @_{m}\varphi))$ is provable in $\mathsf{T}\mathbf{EFL}$, where recall that $m$ satisfies the freshness condition. By applying Lemma \ref{lem:inv} repeatedly, we obtain the provability of $\alpha \cdot_{n} i: @_{n}\psi, \alpha \cdot_{n} i: @_{n} \langle \mathsf{F} \rangle m \stackrel{\mathcal{T} \cup \setof{\alpha \cdot_{n}i}}{\Rightarrow} \alpha\cdot_{n} i: @_{m}\varphi$ in $\mathsf{T}\mathbf{EFL}$ for some fresh label $\alpha\cdot_{n} i$. Then we proceed as follows:
\[
\infer[(@R)]{\stackrel{\mathcal{T}}{\Rightarrow} \alpha:@_{k} @_{n}\Box (\psi \to @_{n} \mathsf{F} \varphi)
}{
\infer[(\Box R)]{\stackrel{\mathcal{T}}{\Rightarrow} \alpha:@_{n}\Box (\psi \to @_{n} \mathsf{F} \varphi)}{
\infer[(\to R)]{
\stackrel{\mathcal{T} \cup \setof{\alpha \cdot_{n} i}}{\Rightarrow} \alpha \cdot_{n} i: @_{n} (\psi \to @_{n} \mathsf{F} \varphi)
}{
\infer[(@ R)]{
\alpha \cdot_{n} i: @_{n}  \psi \stackrel{\mathcal{T} \cup \setof{\alpha \cdot_{n} i}}{\Rightarrow} \alpha \cdot_{n} i: @_{n}  @_{n} \mathsf{F} \varphi
}{
\infer[(\mathsf{F}R)]{
\alpha \cdot_{n} i: @_{n}  \psi \stackrel{\mathcal{T} \cup \setof{\alpha \cdot_{n} i}}{\Rightarrow} \alpha \cdot_{n} i: @_{n} \mathsf{F} \varphi
}{
\alpha \cdot_{n} i: @_{n}\psi, \alpha \cdot_{n} i: @_{n} \langle \mathsf{F} \rangle m \stackrel{\mathcal{T} \cup \setof{\alpha \cdot_{n}i}}{\Rightarrow} \alpha\cdot_{n} i: @_{m}\varphi
}
}
}
}
},
\]
as required. 
\item[(\texttt{Nec}$_{@}$)] Let $\varphi_{j}$ $\equiv$ $@_{n}\psi$ be obtained by $(\texttt{Name})$. Fix any finite tree $\mathcal{T}$, $\alpha \in \mathcal{T}$ and nominal $k$. We show that $\stackrel{\mathcal{T}}{\Rightarrow} \alpha: @_{k}@_{n} \psi$ is provable in $\mathsf{T}\mathbf{EFL}$. By the rule $(@R)$, it suffices to establish the provability of $\stackrel{\mathcal{T}}{\Rightarrow} \alpha: @_{n} \psi$ in $\mathsf{T}\mathbf{EFL}$. This is immediate from induction hypothesis. 

\item[(\texttt{Nec}$_{\Box}$)] Let $\varphi_{j}$ $\equiv$ $\Box \psi$ be obtained by $(\texttt{Nec}_{\Box})$. Fix any finite tree $\mathcal{T}$, $\alpha \in \mathcal{T}$ and nominal $n$. By induction hypothesis, $\stackrel{\mathcal{T}\cup \setof{\alpha\cdot_{n} i}}{\Rightarrow} \alpha\cdot_{n} i : @_{n} \psi$ is provable in $\mathsf{T}\mathbf{EFL}$, where $\alpha\cdot_{n} i$ is fresh in $\mathcal{T}$. By the rule $(\Box R)$ of $\mathsf{T}\mathbf{EFL}$, the provability of $\stackrel{\mathcal{T}}{\Rightarrow} \alpha : @_{n} \Box \psi$ follows, as desired. 
\item[(\texttt{Nec}$_{\mathsf{F}}$)] Let $\varphi_{j}$ $\equiv$ $\mathsf{F} \psi$ be obtained by $(\texttt{Nec}_{\mathsf{F}})$. Fix any finite tree $\mathcal{T}$, $\alpha \in \mathcal{T}$ and nominal $n$. Let $m$ be a fresh nominal in $\psi$. By induction hypothesis, $\stackrel{\mathcal{T}}{\Rightarrow} \alpha  : @_{m} \psi$ is provable in $\mathsf{T}\mathbf{EFL}$. By the admissibility of weakening rule from Proposition \ref{prop:admissible}, we obtain the provability of $\alpha:@_{n}\langle \mathsf{F}\rangle m \stackrel{\mathcal{T}}{\Rightarrow} \alpha  : @_{m} \psi$. Since $m$ is fresh in $\psi$, the rule $(\mathsf{F}R)$ enables us to derive the provability of $\stackrel{\mathcal{T}}{\Rightarrow} \alpha  : @_{n}\mathsf{F} \psi$ in $\mathsf{T}\mathbf{EFL}$, as desired. \qed
\end{itemize}
\end{proof}

\begin{corollary}[Soudness and Completenss of $\mathsf{H}\mathbf{EFL}$]
\label{cor:summary}
The following are all equivalent: for every formula $\varphi$, 
\begin{enumerate}
\item $\varphi$ is valid in the class of all models, 
\footnote{
We do not need to assume that each of our models is {\em named} in the sense that each agent is named by an agent nominal in this statement. 
}
\item $\stackrel{\mathcal{T}}{\Rightarrow}\alpha: @_{n}\varphi$ is provable in $\mathsf{T}\mathbf{EFL}^{-}$ for all $\mathcal{T}$, $\alpha \in \mathcal{T}$ and nominals $n$,
\item $\stackrel{\mathcal{T}}{\Rightarrow} \alpha: @_{n}\varphi$ is provable in $\mathsf{T}\mathbf{EFL}$ for all $\mathcal{T}$, $\alpha \in \mathcal{T}$ and nominals $n$, 
\item $\varphi$ is provable in $\mathsf{H}\mathbf{EFL}$. 
\end{enumerate}
\end{corollary}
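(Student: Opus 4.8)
The plan is to prove Corollary~\ref{cor:summary} by assembling the four implications $1\Rightarrow 2\Rightarrow 3\Rightarrow 4\Rightarrow 1$ into a cycle, invoking the theorems and corollaries already proved. First I would establish $1\Rightarrow 2$. Assume $\varphi$ is valid in the class of all models and fix any tree $\mathcal{T}$, any $\alpha\in\mathcal{T}$, and any nominal $n$ fresh in $\varphi$. It suffices to check that $\mathfrak{M},f\models\;\stackrel{\mathcal{T}}{\Rightarrow}\alpha:@_n\varphi$ for every model $\mathfrak{M}$ and every $\mathcal{T}$-assignment $f$; indeed by definition of satisfaction of a labelled formula this just says $\mathfrak{M},(f(\alpha),\underline{n})\models\varphi$, which holds because $\varphi$ is valid. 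Then Theorem~\ref{thm:comp_tree_seq4efl} (completeness of cut-free $\mathsf{T}\mathbf{EFL}^-$) gives provability of $\stackrel{\mathcal{T}}{\Rightarrow}\alpha:@_n\varphi$ in $\mathsf{T}\mathbf{EFL}^-$.

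Next, $2\Rightarrow 3$ is immediate since any $\mathsf{T}\mathbf{EFL}^-$-derivation is a $\mathsf{T}\mathbf{EFL}$-derivation (equivalently one may cite Corollary~\ref{cor:sum_tree}). The implication $3\Rightarrow 4$ uses Theorem~\ref{thm:comp_hil4efl}: from provability of $\stackrel{\mathcal{T}}{\Rightarrow}\alpha:@_n\varphi$ in $\mathsf{T}\mathbf{EFL}$ we get $\vdash_{\mathsf{H}\mathbf{EFL}}[\![\stackrel{\mathcal{T}}{\Rightarrow}\alpha:@_n\varphi]\!]_i$ where $i$ is the root of $\mathcal{T}$. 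The point to spell out is that this formulaic translation, after unwinding the necessity-form prefix corresponding to the path from the root down to $\alpha$, has the shape $L(@_n\varphi)$ for a suitable necessity form $L$ built from $\top$'s and boxes along that path (since $\Gamma=\Delta=\emptyset$ except for the single succedent formula at $\alpha$). One then strips the necessity form: $\top\to\chi$ is equivalent to $\chi$, and from $\vdash_{\mathsf{H}\mathbf{EFL}}@_m\Box\chi$ one recovers $\vdash_{\mathsf{H}\mathbf{EFL}}\chi$ — this last step needs a small lemma, roughly that $@_m\Box\chi$ together with the axioms lets us derive $@_m\chi$ and then $\chi$ via $(\texttt{Elim})$/$(\texttt{Name})$, using freshness of the auxiliary nominals. (Alternatively, one chooses $\mathcal{T}=\{i\}$ a single-node tree and $\alpha=i$ from the start, so the translation is literally $\top\to(\bot\lor @_n\varphi)$, i.e.\ $@_n\varphi$, and then $(\texttt{Name})$ with $n$ fresh in $\varphi$ — note $@_n\varphi$ and $n\to @_n\varphi$ are interderivable, and $n\to\varphi$ follows from $(\texttt{Elim})$ — yields $\vdash_{\mathsf{H}\mathbf{EFL}}\varphi$. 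The ``for all $\mathcal{T},\alpha$'' version then follows because $2$ already holds for the special case, or one just argues the translation is provably equivalent to $@_n\varphi$ in general.) Finally $4\Rightarrow 1$: if $\vdash_{\mathsf{H}\mathbf{EFL}}\varphi$ then by Theorem~\ref{thm:sound_hil4efl} we get $\stackrel{\mathcal{T}}{\Rightarrow}\alpha:@_n\varphi$ provable in $\mathsf{T}\mathbf{EFL}$ for any $\mathcal{T}$, $\alpha$, and $n$ fresh in $\varphi$; pick $\mathcal{T}=\{0\}$, $\alpha=0$, and apply the soundness of $\mathsf{T}\mathbf{EFL}$ (Theorem~\ref{thm:sound_tree_seq4efl}) to conclude $\mathfrak{M},(w,\underline{n})\models\varphi$ for every model and every state $w$; since $n$ is fresh, by a substitution argument (Proposition~\ref{prop:admissible}, or simply varying the interpretation of $n$) this gives $\mathfrak{M},(w,a)\models\varphi$ for all $(w,a)$, i.e.\ validity.

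I expect the main obstacle to be the bookkeeping in $3\Rightarrow 4$: verifying cleanly that the formulaic translation of the specific sequent $\stackrel{\mathcal{T}}{\Rightarrow}\alpha:@_n\varphi$ is provably equivalent in $\mathsf{H}\mathbf{EFL}$ to $@_n\varphi$ (hence to $\varphi$ under the freshness of $n$), independent of the choice of $\mathcal{T}$ and $\alpha$. This is where one must handle the $@_m\Box(\cdot)$ layers introduced by intermediate nodes and the trivial hypotheses/disjuncts, using $(\texttt{Nec})$-rules, $(\texttt{K})$-rules, $(\texttt{Agree})$, $(\texttt{Elim})$, and $(\texttt{Name})$; everything else is either a direct citation or a routine semantic observation. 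Once the cycle is closed, all four conditions are equivalent, which is exactly the soundness and completeness of $\mathsf{H}\mathbf{EFL}$ for the two-dimensional semantics.
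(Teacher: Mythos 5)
Your proof is essentially the paper's: the equivalence of items 1--3 comes from soundness plus cut-free completeness (the paper packages this as Corollary~\ref{cor:sum_tree}), $4\Rightarrow 3$ is Theorem~\ref{thm:sound_hil4efl}, and $3\Rightarrow 4$ instantiates item~3 at the one-node tree, applies Theorem~\ref{thm:comp_hil4efl} to get $\vdash_{\mathsf{H}\mathbf{EFL}}@_{n}\varphi$, and finishes with $(\texttt{Elim})$ and $(\texttt{Name})$ --- which is exactly your parenthetical ``alternative.'' Your closing of the cycle via $4\Rightarrow 1$ (soundness of $\mathsf{T}\mathbf{EFL}$ plus the observation that freshness of $n$ lets one vary $V(n)$ over all agents, so unnamed agents are covered) is fine and in fact makes explicit something the paper leaves implicit in its footnote.

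One caution: the ``main line'' you sketch for $3\Rightarrow 4$, and the ``main obstacle'' you identify at the end, rest on a claim that is false in general. For an arbitrary tree $\mathcal{T}$ with branches not on the path to $\alpha$, the translation $[\![\,\stackrel{\mathcal{T}}{\Rightarrow}\alpha:@_{n}\varphi\,]\!]_{i}$ acquires disjuncts such as $@_{m}\Box(\top\to\bot)$, i.e.\ $@_{m}\Box\bot$, which are not refutable in a $\mathbf{K}$-style system; hence the translation is \emph{not} provably equivalent to $@_{n}\varphi$ independently of $\mathcal{T}$ and $\alpha$. Likewise, ``stripping'' a necessity form is not admissible in the direction you need: from $\vdash L(\chi)$ with $L=\psi\to\#$ one cannot conclude $\vdash\chi$ unless $\vdash\neg\psi\to$ something, and from $\vdash@_{m}\Box\chi$ one cannot recover $\vdash\chi$ without a reflexivity axiom. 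None of this damages the corollary, because item~3 is universally quantified over $\mathcal{T}$ and $\alpha$, so a single well-chosen instance ($\mathcal{T}=\{0\}$, $\alpha=0$) suffices --- but you should drop the general-$\mathcal{T}$ route rather than present it as the primary argument with the one-node case as a fallback.
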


\begin{proof}
Item 1 is equivalent to the following:  $\stackrel{\mathcal{T}}{\Rightarrow} \alpha: @_{n}\varphi$ is true for all pairs $(\mathfrak{M}, f)$ of models and assignments, finite trees $\mathcal{T}$, $\alpha \in \mathcal{T}$ and nominals $n$. Then the equivalence between items 1, 2 and 3 holds by Corollary \ref{cor:sum_tree}. 
The direction from item 4 to item 3 holds by Theorem \ref{thm:sound_hil4efl}. 
Finally, the direction from item 3 to item 4 is established as follows. 
Suppose item 3. Let $n$ be a fresh nominal. By the supposition, 
$\stackrel{\setof{0}}{\Rightarrow} 0: @_{n}\varphi$ is provable in $\mathsf{T}\mathbf{EFL}$. It follows from Theorem \ref{thm:comp_hil4efl} that $\vdash_{\mathsf{H}\mathbf{EFL}} [\![ \stackrel{\setof{0}}{\Rightarrow} 0: @_{n}\varphi ]\!]_{0}$, which implies $\vdash_{\mathsf{H}\mathbf{EFL}} @_{n}\varphi$. 
By the axiom (\texttt{Elim}), we obtain $\vdash_{\mathsf{H}\mathbf{EFL}} n \to \varphi$ hence $\vdash_{\mathsf{H}\mathbf{EFL}} \varphi$ by $(\texttt{Name})$, as required. \qed
\end{proof}

\section{Extensions of Epistemic Logic of Friendship}
\label{sec:ext}

This section explains how we extend our tree sequent calculus $\mathsf{T}\mathbf{EFL}$ and Hilbert system $\mathsf{H}\mathbf{EFL}$. In particular, we discuss extensions where $\Box$ follows $\mathbf{S4}$ or $\mathbf{S5}$ axioms and/or the friendship relation $\asymp_{w}$ satisfies some universal properties such as irreflexivity, symmetry, etc. ($w \in W$). We note that \cite{SLG2013a,SLG2013} assume that the friendship relation $\asymp_{w}$ satisfies irreflexivity and symmetry and that $\Box$ obeys $\mathbf{S5}$ axioms. Let us introduce the following sets of additional axioms:
\begin{itemize}
\item $\mathbf{KT}$ $:=$ $\inset{\Box \varphi \to \varphi}{\varphi \in \mathsf{Form}}$.
\item $\mathbf{S4}$ $:=$ $\mathbf{KT} \cup \inset{\Box \varphi \to \Box \Box \varphi}{\varphi \in \mathsf{Form}}$.
\item $\mathbf{S5}$ $:=$ $\mathbf{S4} \cup \inset{  \varphi \to \Box \neg \Box \neg \varphi }{\varphi \in \mathsf{Form}}$. 
\end{itemize}
Let us consider formulas of the form $@_{n}m$ or $@_{n}\langle \mathsf{F} \rangle m$, which are denoted by $\rho_{i}$, $\rho_{i}'$, etc. below. Let us consider a formula $\varphi$ of the following form: 
\begin{center}
$\left( \rho_{1} \land \cdots \land \rho_{h} \right) \to \left( \rho_{1}' \lor \cdots \lor \rho_{l}' \right)$, 
\end{center}
where we note that $h$ and $l$ are possibly zero. We say that a formula of such form is a {\em regular implication}~\cite[Sec.~6]{NegriPlato2001} (we may even consider a more general class of formulas called {\em geometric formulas} (cf.~\cite{Bruennler2009}), but we restrict our attention to regular implications in this paper for simplicity).  
The corresponding frame property of a regular implication is obtained by regarding $@_{n}m$ or $@_{n}\langle \mathsf{F} \rangle m$ by ``$a_n$ = $a_m$'' and ``$a_n \asymp_{w} a_m$'' and putting the universal quantifiers for all agents and $w$. For example, irreflexivity and symmetry of 
$\asymp_{w}$ are defined by 
\begin{itemize}
\item $\mathsf{irr}_{\asymp}$ $:=$ $@_{n}\langle \mathsf{F}\rangle n \to \bot$
\item $\mathsf{sym}_{\asymp}$ $:=$ $@_{n}\langle \mathsf{F}\rangle m \to @_{m}\langle \mathsf{F}\rangle n$,
\end{itemize}
respectively.  

Now let us move to tree sequent systems. First, we introduce an inference rule for a regular implication. For a regular implication $\varphi$ displayed above, we can define the corresponding inference rule $(\mathsf{ri}(\varphi))$ for tree sequent calculus as follows (cf.~\cite{Bruennler2009}, \cite[Sec.~6]{NegriPlato2001}): 
\begin{center}
$\infer[(\mathsf{ri}(\varphi))]{\alpha:\rho_{1}, \ldots, \alpha:\rho_{h}, \Gamma \stackrel{\mathcal{T}}{\Rightarrow} \Delta }{
\alpha:\rho'_{1}, \Gamma \stackrel{\mathcal{T}}{\Rightarrow} \Delta
&
\cdots
&
\alpha:\rho'_{l}, \Gamma \stackrel{\mathcal{T}}{\Rightarrow} \Delta}
$
\end{center}
When $l$ = $0$, the rule $\mathsf{ri}(\varphi)$ is a zero premise rule of the following form:
\[
\infer[(\mathsf{ri}(\varphi))]{\alpha:\rho_{1}, \ldots, \alpha:\rho_{h}, \Gamma \stackrel{\mathcal{T}}{\Rightarrow} \Delta }{}
\]
When $\asymp_{w}$ is irreflexive or symmetric for all $w \in W$, we can obtain the following rule $(\mathsf{irr}_{\asymp})$ or $(\mathsf{sym}_{\asymp})$, respectively:
\[
\infer[(\mathsf{ri}(\mathsf{irr}_{\asymp}))]{\alpha:@_{n}\langle \mathsf{F} \rangle n, \Gamma \stackrel{\mathcal{T}}{\Rightarrow} \Delta}{}
\qquad
\infer[(\mathsf{ri}(\mathsf{sym}_{\asymp}))]{\alpha:@_{n}\langle \mathsf{F} \rangle m, \Gamma \stackrel{\mathcal{T}}{\Rightarrow} \Delta}{
\alpha: @_{m}\langle \mathsf{F} \rangle n, \Gamma \stackrel{\mathcal{T}}{\Rightarrow} \Delta
}.
\]
Let $\Lambda$ be one of $\mathbf{KT}$, $\mathbf{S4}$ and $\mathbf{S5}$ and $\Theta$ be a possibly empty finite set of regular implication schemes. 
In what follows, we define the tree sequent system $\mathsf{T}\mathbf{EFL}(\Lambda;\Theta)$. Recall that the side condition $\ddagger$ of the rule $(\Box L)$ of Table \ref{table:tree_seq_calc}. First, depending on the choice of $\Lambda$, we change the side condition $\ddagger$ of the rule $(\Box L)$ in $\mathsf{T}\mathbf{EFL}$ into the following one:
\begin{itemize}
\item {$\ddagger_{\mathbf{KT}}$: $\alpha$ $\preceq_{n}$ $\beta$, where $\preceq_{n}$ is the reflexive closure of the $n$-children relation. }
\item {$\ddagger_{\mathbf{S4}}$: $\alpha$ $\preceq_{n}^{\ast}$ $\beta$, where $\preceq_{n}^{\ast}$ is the reflexive transitive closure of the $n$-children relation. }
\item $\ddagger_{\mathbf{S5}}$: $\alpha$ $\sim_{n}$ $\beta$, where $\sim_{n}$ is the reflexive, symmetric, transitive closure of the $n$-children relation. 
\end{itemize}
When $\Lambda$ is one of $\mathbf{KT}$, $\mathbf{S4}$ and $\mathbf{S5}$, we use ``$\Lambda$'' as a subscript of the rule $(\Box L)$ as in:
\[
\infer[(\Box L_{\Lambda})]{{\alpha:@_{n}\Box \varphi}, \Gamma \stackrel{\mathcal{T}}{\Rightarrow} \Delta}{{\beta:@_{n} \varphi}, \Gamma \stackrel{\mathcal{T}}{\Rightarrow} \Delta}
\]
to indicate which side condition is considered. Second, we extend the resulting system with a finite set $\inset{(\mathsf{ri}(\varphi))}{\varphi \in \Theta}$ of inference rules, defined above, to finish to define the system $\mathsf{T}\mathbf{EFL}(\Lambda;\Theta)$. We define $\mathsf{T}\mathbf{EFL}(\Lambda;\Theta)^{-}$ as the system $\mathsf{T}\mathbf{EFL}(\Lambda;\Theta)$ without the cut rule. 

\begin{definition}
Given a set $\Psi$ of formulas and a frame $\mathfrak{F}$ = $(W, A, (R_{a})_{a \in A}, (\asymp_{w})_{w \in W})$ (a model without a valuation), we say that $\Psi$ is {\em valid} in $\mathfrak{F}$ $($notation: $\mathfrak{F} \models \Psi$$)$ if $(\mathfrak{F},V), (w,a) \models \psi$ for all $\psi \in \Psi$, valuations $V$ and pairs $(w,a) \in W\times A$. We define a class $\mathbb{M}_{\Psi}$ of models as $\inset{(\mathfrak{F},V)}{\mathfrak{F} \models \Psi}$. 
\end{definition}

\begin{theorem}
\label{thm:ex_tree}
Let $\Lambda$ be one of $\mathbf{KT}$, $\mathbf{S4}$ and $\mathbf{S5}$, and let $\Theta$ be a possibly empty finite set of regular implications. The following are all equivalent: 
\begin{enumerate}
\item $\mathfrak{M},f \models \Gamma \stackrel{\mathcal{T}}{\Rightarrow} \Delta$ for all models $\mathfrak{M} \in \mathbb{M}_{\Lambda \cup \Theta}$ and all assignments $f$. 
\item $\Gamma \stackrel{\mathcal{T}}{\Rightarrow} \Delta$ is provable in $\mathsf{T}\mathbf{EFL}(\Lambda;\Theta)^{-}$.
\item $\Gamma \stackrel{\mathcal{T}}{\Rightarrow} \Delta$ is provable in $\mathsf{T}\mathbf{EFL}(\Lambda;\Theta)$.
\end{enumerate}
Therefore, $\mathsf{T}\mathbf{EFL}(\Lambda;\Theta)$ enjoys the cut-elimination theorem. 
\end{theorem}

\begin{proof}
The direction from item 2 to item 3 is trivial and it is not difficult to establish the direction from from item 3 to item 1 (soundness result of $\mathsf{T}\mathbf{EFL}(\Lambda;\Theta)$ for the semantics). So we focus on showing the direction from item 1 to item 2 here. An outline of our proof is almost the same as in Lemma \ref{lem:saturation} and Lemma \ref{lem:truth}. First we introduce the notion of saturation of a possibly infinite tree squent as follows. As for $\varphi \equiv \left( \rho_{1} \land \cdots \land \rho_{h} \right) \to \left( \rho_{1}' \lor \cdots \lor \rho_{l}' \right) \in \Theta$, we add the following saturation condition:
\begin{description}
\item[($\mathbf{ri}(\varphi)$)] If $\alpha:\rho_{1}, \ldots, \alpha:\rho_{h} \in \Gamma$ then 
$\alpha:\rho'_{j} \in \Gamma$ for some $1 \leqslant j \leqslant l$. 
\end{description}
where the rule $(\mathsf{ri}(\varphi))$ in the tree sequent calculus is {\em not} a zero premise rule. 
Depending on our choice of $\Lambda$, we change the condition $(\Box \mathbf{l})$ as follows:
\begin{description}
\item[$(\Box \mathbf{l}_{\mathbf{KT}})$] If $\alpha:@_{n}\Box \varphi \in \Gamma$ then $\beta:@_{n}\varphi \in \Gamma$ for all $\beta$ such that $\alpha$ $\preceq_{n}$ $\beta$,
\item[$(\Box \mathbf{l}_{\mathbf{S4}})$] If $\alpha:@_{n}\Box \varphi \in \Gamma$ then $\beta:@_{n}\varphi \in \Gamma$ for all $\beta$ such that $\alpha$ $\preceq_{n}^{\ast}$ $\beta$, 
\item[$(\Box \mathbf{l}_{\mathbf{S5}})$] If $\alpha:@_{n}\Box \varphi \in \Gamma$ then $\beta:@_{n}\varphi \in \Gamma$ for all $\beta$ such that $\alpha$ $\sim_{n}$ $\beta$. 
\end{description}
Now we prove the corresponding saturation lemma to Lemma \ref{lem:saturation}.

Our proof is almost the same as in the proof of Lemma \ref{lem:saturation}. So we explain differences. For $(\textbf{Step 1})$ of the inductive step of the proof of Lemma \ref{lem:saturation}, we modify our construction as follows. Before constructing $\Gamma_{j}^\mathsf{rep}$, we construct $\Gamma_{j}^{\Theta}$ from $\Gamma_{j}$ as follows. We enumeate all the tuples in $\Gamma_{j}$ of the form $(\alpha:\rho_{1},\ldots,\alpha:\rho_{n})$ for some $\varphi \equiv \left( \rho_{1} \land \cdots \land \rho_{h} \right) \to \left( \rho_{1}' \lor \cdots \lor \rho_{l}' \right) \in \Theta$ (we note that the number of such tuples is finite). With the help of such enumeration (let $t$ be the number of such tuples), we inductively construct $(\Gamma_{j}^{(k)})_{0 \leqslant k \leqslant t}$ such that $\Gamma_{j}^{(k)} \subseteq \Gamma_{j}^{(k+1)}$ as follows. Define $\Gamma_{j}^{(0)}$ = $\Gamma_{j}$. Suppose that we have constructed $\Gamma_{j}^{(0)} \subseteq \cdots \subseteq \Gamma_{j}^{(k)}$. 
Let $k$-th tuple of the enumeration be $(\alpha:\rho_{1},\ldots,\alpha:\rho_{n})$ and the corresponding regular implication $\varphi$ is $\left( \rho_{1} \land \cdots \land \rho_{h} \right) \to \left( \rho_{1}' \lor \cdots \lor \rho_{l}' \right)$. We can find some index $f$ such that $\alpha: \rho_{f}', \Gamma_{j}^{(k)} \stackrel{\mathcal{T}}{\Rightarrow} \Delta$ is unprovable in $\mathsf{T}\mathbf{EFL}(\Lambda;\Theta)^{-}$ by the rule $(\mathsf{ri}(\varphi))$. Then we define $\Gamma_{j}^{(k+1)}$ as $\alpha: \rho_{f}', \Gamma_{j}^{(k)}$. Finally we define $\Gamma_{j}^{\Theta}$ := $\bigcup_{1 \leqslant k \leqslant t} \Gamma_{j}^{(k)}$. Then we do the same construction as in Step 1 for $\Gamma_{j}^{\Theta}$ instead of $\Gamma_{j}$. For Step 2, there is no substantial change. This finishes to establish the corresponding saturation lemma to Lemma \ref{lem:saturation}. 

Next we comment on the corresponding lemma to Lemma \ref{lem:truth}. Let $\Gamma \stackrel{\mathcal{T}}{\Rightarrow} \Delta$ be a saturated and unprovable tree sequent in $\mathsf{T}\mathbf{EFL}(\Lambda;\Theta)^{-}$. As in the statement of Lemma \ref{lem:truth}, we define the derived model $\mathfrak{M}$ in the same way except $R_{|n|}$. Depending on our choice of $\Lambda$, we define $R_{|n|}$ as follows:
\begin{description}
\item[$(\mathbf{KT})$] $\alpha R_{|n|} \beta$ iff $\alpha$ $\preceq_{m}$ $\beta$ for some $m \in |n|$.
\item[$(\mathbf{S4})$] $\alpha R_{|n|} \beta$ iff $\alpha$ $\preceq_{m}^{\ast}$ $\beta$ for some $m \in |n|$.
\item[$(\mathbf{S5})$] $\alpha R_{|n|} \beta$ iff $\alpha$ $\sim_{m}$ $\beta$ for some $m \in |n|$.\end{description}
Then it is easy to see $R_{|n|}$ satisfies the corresponding properties of $\Lambda$, i.e., $R_{|n|}$ is reflexive when $\Lambda$ is $\mathbf{KT}$, $R_{|n|}$ is a pre-order when $\Lambda$ is $\mathbf{S4}$, $R_{|n|}$ is an equivalence relation when $\Lambda$ is $\mathbf{S5}$. The remaining argument is the same as in the proof of Lemma \ref{lem:truth}. Moreover, it follows from the saturation condition $\mathbf{ri}(\varphi)$ and the unprovability of $\Gamma \stackrel{\mathcal{T}}{\Rightarrow} \Delta$ in $\mathsf{T}\mathbf{EFL}(\Lambda;\Theta)^{-}$ that the corresponding properties of $\Theta$ are satisfied. This enables us to conclude the derived model $\mathfrak{M}$ belongs to $\mathbb{M}_{\Lambda \cup \Theta}$. This finishes showing the direction from item 1 to item 2. 
\qed
\end{proof}


\begin{definition}
When $\Lambda$ is one of $\mathbf{KT}$, $\mathbf{S4}$ and $\mathbf{S5}$ and $\Theta$ is a finite set of regular implications, a Hilbert system $\mathsf{H}\mathbf{EFL}(\Lambda \cup \Theta)$ is defined as the axiomatic extension of $\mathsf{H}\mathbf{EFL}$ by new axioms $\Lambda \cup \Theta$. 
\end{definition}


\begin{theorem}
\label{thm:ex_hil}
Let $\Lambda$ be one of $\mathbf{KT}$, $\mathbf{S4}$ and $\mathbf{S5}$, and let $\Theta$ be a possibly empty finite set of regular implications. The following are all equivalent: for every formula $\varphi$, 
\begin{enumerate}
\item $\varphi$ is valid in $\mathbb{M}_{\Lambda \cup \Theta}$. 
\item $\stackrel{\mathcal{T}}{\Rightarrow}\alpha: @_{n}\varphi$ is provable in $\mathsf{T}\mathbf{EFL}(\Lambda \cup \Theta)^{-}$ for all $\mathcal{T}$, $\alpha \in \mathcal{T}$ and nominals $n$,
\item $\stackrel{\mathcal{T}}{\Rightarrow} \alpha: @_{n}\varphi$ is provable in $\mathsf{T}\mathbf{EFL}(\Lambda \cup \Theta)$ for all $\mathcal{T}$, $\alpha \in \mathcal{T}$ and nominals $n$, 
\item $\varphi$ is provable in $\mathsf{H}\mathbf{EFL}(\Lambda \cup \Theta)$. 
\end{enumerate}
\end{theorem}

\begin{proof}
By Theorem \ref{thm:ex_tree}, we can establish the equivalence between items 1, 2 and 3. 
We are going to provide our argument for a direction from item 4 to item 3 and a direction from item 3 to item 4.

From item 4 to item 3, we prove a similar statement to Theorem \ref{thm:sound_hil4efl}. But it suffices to prove the additional axioms from $\Lambda \cup \Theta$ are provable in $\mathsf{T}\mathbf{EFL}(\Lambda \cup \Theta)$. In what folows, let us fix any tree $\mathcal{T}$, $\alpha \in \mathcal{T}$ and nominal $n$. First of all, let $\psi \equiv \left( \rho_{1} \land \cdots \land \rho_{h} \right) \to \left( \rho_{1}' \lor \cdots \lor \rho_{l}' \right) \in \Theta$. We show $\stackrel{\mathcal{T}}{\Rightarrow} \alpha: @_{n}\psi$ is provable in $\mathsf{T}\mathbf{EFL}(\Lambda \cup \Theta)$. The crucial part of this derivation is the following:
\[
\infer[(\mathsf{ri}(\varphi))]{
\alpha: \rho_{1}, \ldots, \alpha: \rho_{h} \stackrel{\mathcal{T}}{\Rightarrow} \alpha: \rho_{1}', \ldots, \alpha: \rho_{l}'
}{
\infer[(\mathsf{id})]{
\alpha: \rho_{1}' \stackrel{\mathcal{T}}{\Rightarrow} \alpha: \rho_{1}', \ldots, \alpha: \rho_{l}'}{}
&
\cdots
&
\infer[(\mathsf{id})]{
\alpha: \rho_{l}' \stackrel{\mathcal{T}}{\Rightarrow} \alpha: \rho_{1}', \ldots, \alpha: \rho_{l}'}{}
}.
\]
Let us move to $\Lambda$. When $\Lambda$ is $\mathbf{KT}$, it suffices to give the following derivation: 
\[
\infer[ (\Box L_{\mathbf{KT}} ) ]{ \alpha: @_{n}\Box \varphi \stackrel{\mathcal{T}}{\Rightarrow}  \alpha: @_{n}\varphi}{
\infer[(\mathsf{id})]{\alpha: @_{n} \varphi \stackrel{\mathcal{T}}{\Rightarrow}  \alpha: @_{n}\varphi}{}
}.
\]
When $\Lambda$ is $\mathbf{K4}$, it suffices to give the following: 
\[
\infer[(\Box R)]{\alpha: @_{n}\Box \varphi \stackrel{\mathcal{T}}{\Rightarrow}  \alpha: @_{n}\Box \Box \varphi }{
\infer[(\Box R )]{  \alpha: @_{n}\Box \varphi \stackrel{\mathcal{T} \cup \setof{\alpha \cdot_{n} i}}{\Rightarrow}  \alpha \cdot_{n} i : @_{n} \Box \varphi }{
\infer[(\Box L_{\mathbf{S4}})]{ \alpha: @_{n}\Box \varphi \stackrel{\mathcal{T} \cup \setof{\alpha \cdot_{n} i, \alpha \cdot_{n} i \cdot_{n} j}}{\Rightarrow}  \alpha \cdot_{n} i \cdot_{n} j: @_{n} \varphi }{
\infer[(\mathsf{id})]{
\alpha \cdot_{n} i \cdot_{n} j: @_{n} \varphi \stackrel{\mathcal{T} \cup \setof{\alpha \cdot_{n} i, \alpha \cdot_{n} i \cdot_{n} j}}{\Rightarrow}  \alpha \cdot_{n} i \cdot_{n} j: @_{n} \varphi
}{}
}
}
}.
\]
Finally, if $\Lambda$ is $\mathbf{S5}$, on the top of the above two derivation, it suffices to consider the following derivation: 
\[
\infer[(\Box R)]{ \alpha: @_{n} \varphi \stackrel{\mathcal{T}}{\Rightarrow}  \alpha: @_{n}\Box \neg \Box \neg \varphi  }{
\infer[(\neg R)]{\alpha: @_{n} \varphi \stackrel{\mathcal{T}\cup \setof{\alpha \cdot_{n} i}}{\Rightarrow} \alpha \cdot_{n} i: @_{n}\neg \Box \neg \varphi  }{
\infer[(\Box L_{\mathbf{S5}})]{\alpha \cdot_{n} i: @_{n}\Box \neg \varphi, \alpha: @_{n} \varphi \stackrel{\mathcal{T}\cup \setof{\alpha \cdot_{n} i}}{\Rightarrow} }{
\infer[(\neg L)]{\alpha : @_{n} \neg \varphi, \alpha: @_{n} \varphi \stackrel{\mathcal{T}\cup \setof{\alpha \cdot_{n} i}}{\Rightarrow}}{
\infer[(\mathsf{id})]{
\alpha: @_{n} \varphi \stackrel{\mathcal{T}\cup \setof{\alpha \cdot_{n} i}}{\Rightarrow} \alpha : @_{n} \varphi}{}
}
}
}
}.
\]
For the direction from item 3 to item 4, it suffices to establish the formulaic translation of the rules $(\mathsf{ri}(\varphi))$ for all $\varphi \in \Theta$, $(\Box L_{\mathbf{KT}})$, $(\Box L_{\mathbf{S4}})$ and $(\Box L_{\mathbf{S5}})$ at the root node preserves the provability in the corresponding system $\mathsf{H}\mathbf{EFL}(\Lambda \cup \Theta)$. Since the case of $(\mathsf{ri}(\varphi))$ is not so difficult for every $\varphi \in \Theta$, we focus on all the other rules. All the other rules have the following form:
\[
\infer[(\Box L_{\Lambda})]{\alpha:@_{n}\Box \varphi, \Gamma \stackrel{\mathcal{T}}{\Rightarrow} \Delta  }{\beta:@_{n} \varphi, \Gamma \stackrel{\mathcal{T}}{\Rightarrow} \Delta}
\]
Let us suppose that $i$ be the root node of $\mathcal{T}$. Let $(\alpha_{0}, \ldots, \alpha_{l})$ be the unique path from the $\alpha$ ($\equiv \alpha_{l}$) to the root node $i$ ($\equiv \alpha_{0}$). Similarly as in the proof of the case of the rule $(\Box L)$ in Theorem \ref{thm:comp_hil4efl}, by induction on $h$, that: 
\[
\vdash_{\mathsf{H}\mathbf{EFL}(\Lambda \cup \Theta)}  [\![ \beta:@_{n} \varphi, \Gamma \stackrel{\mathcal{T}}{\Rightarrow} \Delta ]\!]_{\alpha_{l-h}} \to [\![ \alpha:@_{n}\Box \varphi, \Gamma \stackrel{\mathcal{T}}{\Rightarrow} \Delta ]\!]_{\alpha_{l-h}},
\]
where we often omit the subscript of $\vdash_{\mathsf{H}\mathbf{EFL}(\Lambda \cup \Theta)}$ to simply write $\vdash$ below when no confusion arises. 
When $\Lambda$ is $\mathbf{KT}$, it suffice to check an application of $(\Box L_{\mathbf{KT}})$ where $\beta$ is $\alpha$ itself. We only establish the base case where $h$ = $0$. That is, we establish\[
\vdash_{\mathsf{H}\mathbf{EFL}(\mathbf{KT} \cup \Theta)}  [\![ \alpha:@_{n} \varphi, \Gamma \stackrel{\mathcal{T}}{\Rightarrow} \Delta ]\!]_{\alpha} \to [\![ \alpha:@_{n}\Box \varphi, \Gamma \stackrel{\mathcal{T}}{\Rightarrow} \Delta ]\!]_{\alpha},
\]
To show this, it suffice to show the following:
\[
\vdash_{\mathsf{H}\mathbf{EFL}(\mathbf{KT} \cup \Theta)} ((\gamma \land @_{n} \varphi) \to \delta)  \to ((\gamma \land @_{n} \Box \varphi) \to \delta),
\]
which is easily obtained by $\vdash_{\mathsf{H}\mathbf{EFL}(\mathbf{KT} \cup \Theta)} \Box \varphi \to \varphi$, $(\texttt{Nec}_{@})$ and $(\texttt{K}_{@})$. 

Let us move to the case where $\Lambda$ is $\mathbf{S4}$. In this case, we suffice to check an application of $(\Box L_{\mathbf{S4}})$ where $\beta$ is a grand $n$-child of $\alpha$, i.e., $\beta$ is an $n$-child of $\alpha'$ and $\alpha'$ is an $n$-child of $\alpha'$ for some label $\alpha' \in \mathcal{T}$. To show the base case where $h$ $=$ 0, it suffice to prove the following: 
\begin{align*}
\vdash_{\mathsf{H}\mathbf{EFL}(\mathbf{S4} \cup \Theta)} 
&\left(\gamma_{\alpha}  \to (\delta_{\alpha} \lor @_{n}(\gamma_{\alpha'}  \to (\delta_{\alpha'} \lor @_{n} \Box (  (@_{n}\varphi \land \gamma_{\beta}) \to \delta_{\beta} ) ))) \right)\\
&\to
\left((@_{n}\Box \varphi \land \gamma_{\alpha}) \to (\delta_{\alpha} \lor @_{n}(\gamma_{\alpha'}  \to (\delta_{\alpha'} \lor @_{n} \Box (  \gamma_{\beta} \to \delta_{\beta} ) ))) \right),
\end{align*}
which is provable by $\vdash_{\mathsf{H}\mathbf{EFL}(\mathbf{S4} \cup \Theta)} \Box \varphi \to \Box\Box\varphi$, $(\texttt{Nec}_{@})$ and $(\texttt{K}_{@})$.

Finally, if $\Lambda$ is $\mathbf{S5}$, we suffice to check an application of $(\Box L_{\mathbf{S5}})$ where $\beta$ is an {\em $n$-parent} of $\alpha$, i.e., $\alpha$ is an $n$-child of $\beta$. In this case, our base case is $h$ = $1$, i.e., we show the preservation of the provability of the formulaic translation at $\beta$: 
\[
\vdash_{\mathsf{H}\mathbf{EFL}(\mathbf{S5} \cup \Theta)}  [\![ \beta:@_{n} \varphi, \Gamma \stackrel{\mathcal{T}}{\Rightarrow} \Delta ]\!]_{\beta} \to [\![ \alpha:@_{n}\Box \varphi, \Gamma \stackrel{\mathcal{T}}{\Rightarrow} \Delta ]\!]_{\beta}.
\]
To show it, it suffices to establish the following:
\begin{align*}
\vdash_{\mathsf{H}\mathbf{EFL}(\mathbf{S5} \cup \Theta)} 
&\left((@_{n}\varphi \land \gamma_{\beta}) \to (\delta_{\beta} \lor @_{n}\Box(\gamma_{\alpha} \to \delta_{\alpha})) \right)\\
&\to
\left(\gamma_{\beta} \to (\delta_{\beta} \lor @_{n}\Box((@_{n}\Box \varphi  \land \gamma_{\alpha}) \to \delta_{\alpha})) \right).
\end{align*}
Since this is equivalent with: 
\begin{align*}
\vdash_{\mathsf{H}\mathbf{EFL}(\mathbf{S5} \cup \Theta)} 
&\left( \gamma_{\beta} \to ( \neg @_{n} \varphi \lor  \delta_{\beta} \lor @_{n}\Box(\gamma_{\alpha} \to \delta_{\alpha})) \right)\\
&\to
\left(\gamma_{\beta} \to (\delta_{\beta} \lor @_{n}\Box((@_{n}\Box \varphi  \land \gamma_{\alpha}) \to \delta_{\alpha})) \right),
\end{align*}
we need to establish:
\[
\vdash_{\mathsf{H}\mathbf{EFL}(\mathbf{S5} \cup \Theta)} @_{n}\neg \varphi \to @_{n}\Box @_{n} \neg  \Box  \varphi
\]
by $\vdash \neg @_{n} \psi \leftrightarrow @_{n}\neg \psi$. With the help of the axiom (\texttt{DCom}$@\Box$), $(\texttt{Nec}_{@})$ and $(\texttt{K}_{@})$, the provability above is reduced to 
$\vdash_{\mathsf{H}\mathbf{EFL}(\mathbf{S5} \cup \Theta)} \neg \varphi \to \Box \neg  \Box  \varphi$, which is easily obtained from the axiom scheme $\psi \to \Box \neg \Box \neg \psi$. \qed
\end{proof}

\noindent Recall that \cite{SLG2013a,SLG2013} assume that the friendship relation $\asymp_{w}$ satisfies irreflexivity and symmetry and that $\Box$ obeys $\mathbf{S5}$ axioms. As a corollary of Theorem \ref{thm:ex_hil}, the following provides a complete axiomatization of the logic studied in~\cite{SLG2013a,SLG2013}, where $\mathsf{irr}_{\asymp}$ is $@_{n}\langle \mathsf{F}\rangle n \to \bot$ and $\mathsf{sym}_{\asymp}$ is $@_{n}\langle \mathsf{F}\rangle m \to @_{m}\langle \mathsf{F}\rangle n$.

\begin{corollary}
\label{cor:comp_efls5}
The following are all equivalent: for every formula $\varphi$, 
\begin{enumerate}
\item $\varphi$ is valid in $\mathbb{M}_{\mathbf{S5} \cup \setof{\mathsf{irr}_{\asymp}, \mathsf{sym}_{\asymp}}}$. 
\item $\stackrel{\mathcal{T}}{\Rightarrow}\alpha: @_{n}\varphi$ is provable in $\mathsf{T}\mathbf{EFL}(\mathbf{S5} \cup \setof{\mathsf{irr}_{\asymp}, \mathsf{sym}_{\asymp}})^{-}$ for all finite tree $\mathcal{T}$, $\alpha \in \mathcal{T}$ and nominals $n$,
\item $\stackrel{\mathcal{T}}{\Rightarrow} \alpha: @_{n}\varphi$ is provable in $\mathsf{T}\mathbf{EFL}(\mathbf{S5} \cup \setof{\mathsf{irr}_{\asymp}, \mathsf{sym}_{\asymp}})$ for all finite tree  $\mathcal{T}$, $\alpha \in \mathcal{T}$ and nominals $n$, 
\item $\varphi$ is provable in $\mathsf{H}\mathbf{EFL}(\mathbf{S5} \cup \setof{\mathsf{irr}_{\asymp}, \mathsf{sym}_{\asymp}})$. 
\end{enumerate}
\end{corollary}

\section{Further Directions}
\label{sec:fd}
This paper positively answered the question if the set of all valid formulas of $\mathbf{EFL}$ in the class of all models is axiomatizable. We list some directions for further research. 
\begin{enumerate}
\item Is $\mathsf{H}\mathbf{EFL}$ or $\mathsf{T}\mathbf{EFL}$ decidable?
\item Is it possible to provide a syntactic proof of the cut elimination theorem of $\mathsf{T}\mathbf{EFL}$?
\item Can we reformulate our sequent calculus into a G3-style calculus, i.e., a contraction-free calculus, all of whose rules are height-preserving invertible?
\item Provide a G3-style labelled sequent calculus for $\mathbf{EFL}$ based on the idea of doubly labelled formula $(x,y):\varphi$. This is an extension of G3-style labelled sequent calculus for modal logic in~\cite{NegriPlato2011,Negri2005}.
\item Prove the semantic completeness of $\mathsf{H}\mathbf{EFL}$ and its extensions by specifying the notion of {\em canonical model}.
\item Can we apply our technique of this paper to obtain a Hilbert system of {\em Term Modal Logics} which is proposed in~\cite{Fitting2001}?
\footnote{I would like to thank the anonymous reviewers of LORI VI for their careful reading of the manuscript and their many useful comments and suggestions.  I presented the contents of this paper first at the 48th MLG meeting at Kaga, Ishikawa, Japan on 6th December 2013 and then at Kanazawa Workshop for Epistemic Logic and its Dynamic Extensions, Kanazawa, Japan on 22nd February 2014. I would like to thank Alexandru Baltag, Jeremy Seligman and Fenrong Liu for fruitful discussions of the topic. All errors, however, are mine. The work of the author was partially supported by JSPS KAKENHI Grant-in-Aid for Young Scientists (B) Grant Number 15K21025 and Grant-in-Aid for Scientific Research (B) Grant Number 17H02258, and JSPS Core-to-Core Program (A. Advanced Research Networks). 
}
\end{enumerate}

\bibliographystyle{plain}

\end{document}